\DeclareRobustCommand{\qed}{%
  \ifmmode 
  \else \leavevmode\unskip\penalty9999 \hbox{}\nobreak\hfill
  \fi
  \quad\hbox{\qedsymbol}}
\newcommand{\openbox}{\leavevmode
  \hbox to.77778em{%
  \hfil\vrule
  \vbox to.675em{\hrule width.6em\vfil\hrule}%
  \vrule\hfil}}
\newcommand{\qedsymbol}{\openbox}
\newenvironment{proof}[1][\proofname]{\par
  \normalfont
  \topsep6\p@\@plus6\p@ \trivlist
  \item[\hskip\labelsep\itshape
    #1.]\ignorespaces
}{%
  \qed\endtrivlist
}
\newcommand{\proofname}{Proof}
\def \MB{\mathbb{M}_{BFS}}
\def \MD{\mathbb{M}_{DFS}}
\def \MG{\mathbb{M}_{AP}}
\def \AA{\mathcal{A}}
\def \qq{\boldsymbol{q}}
\def \bb{\boldsymbol{b}}
\def \safepolicyp{FCFS policy}
\def \safepoliciesp{FCFS policies}
\def \safepolicy{\safepolicyp }
\def \safepolicies{\safepoliciesp }
\def \SP{FCFSP}
\newtheorem{lemma}{Lemma}
\newtheorem{thm}{Theorem}
\newtheorem{example}{Example}
\begin{document}

\begin{frontmatter}

\title{On the Manipulability of Maximum Vertex-Weighted Bipartite $b$-matching Mechanisms}

\author[A]{\fnms{Gennaro}~\snm{Auricchio}\thanks{Corresponding Author. Email: ga647@bath.ac.uk}}
\author[A]{\fnms{Jie}~\snm{Zhang}}

\address[A]{Department of Computer Science, University of Bath}

\begin{abstract}
In this paper, we study the Maximum Vertex-weighted $b$-Matching (MVbM) problem on bipartite graphs in a new game-theoretical environment. In contrast to other game-theoretical settings, we consider the case in which the value of the tasks is public and common to every agent so that the private information of every agent consists of edges connecting them to the set of tasks. In this framework, we study three mechanisms. Two of these mechanisms, namely $\MB$ and $\MD$, are optimal but not truthful, while the third one, $\MG$, is truthful but sub-optimal. Albeit these mechanisms are induced by known algorithms, we show $\MB$ and $\MD$ are the best possible mechanisms in terms of Price of Anarchy and Price of Stability, while $\MG$ is the best truthful mechanism in terms of approximated ratio. Furthermore, we characterize the Nash Equilibria of $\MB$ and $\MD$ and retrieve sets of conditions under which $\MB$ acts as a truthful mechanism, which highlights the differences between $\MB$ and $\MD$. Finally, we extend our results to the case in which agents' capacity is part of their private information.
%
\end{abstract}

\end{frontmatter}

\section{Introduction}

Managers of large companies are periodically required to submit a list of the projects they carried out for external evaluations.
Alongside the list of projects, a manager needs to identify a worker liable for the achievement of every project.
Due to the workload cap, every worker can be found responsible for only a finite number of projects.
Moreover, every project has a different prestige, which can be regarded as its overall value. 
The manager aims to maximize the total value of the reported projects by deploying its staff following the aforementioned constraints.
Meanwhile, every worker is interested in being associated with high-value projects rather than low-value ones. 
Therefore, there might be workers that benefit by hiding part of their connections to low value projects.
A similar situation occurs in the universities of several European countries.
Indeed, to assess the research impact of their higher education institutions, the country asks the universities for a list of their best publications, along with the name of the main author.\footnote{This, for example, is a common practice in the United Kingdom, see \url{https://www.ref.ac.uk} for a reference.}
On one hand, every university wants to find a matching between its lecturers and the publications that maximize the total impact.
On the other hand, individual academics want to be designated as the main author of their best publications.
In both scenarios, we need to allocate a set of resources that does possess an objective and publicly known value, be they projects and their prestige or publications and their impact score, to a set of self-interested agents.
Aside from these two examples, there are many other real-life situations that can be rephrased as matching problems with self-interested agents.
Matching problems were first introduced to minimize transportation costs \cite{Hitchcock1941,Kantorovitch1958} and to optimally assign workers to job positions \cite{Easterfield1946,Thorndike1950}. 
Thereafter, bipartite matching found application in several applied problems, such as sponsored searches \cite{DBLP:journals/sigact/BirnbaumM08,DBLP:journals/fttcs/Mehta13}, school admissions \cite{abdulkadirouglu2005college,DBLP:journals/tcs/BiroFIM10}, scheduling \cite{ZHAO2010837,DBLP:conf/soda/GuptaKS14}, and general resource allocation \cite{feng2013device,7295474}.
Indeed, characterizing matching through the maximization or minimization of a functional allows to define mathematical objects that arise in several applied contexts. 
For example, Maximum Cardinality Matchings have connections with the computation of perfect matchings \cite{fukuda1994finding}, bottleneck matchings \cite{edmonds1970bottleneck}, and Lévy-Prokhorov distances, \cite{lahn2021faster}.
Another example is the Maximum Edge-weighted Matching problem, which has been widely used in clustering problems \cite{ding2019clustering}, machine learning \cite{bayati2005maximum}, and also to compute Wasserstein-barycenters \cite{auricchio2019computing}.
For a complete discussion of the matching problems and their applications, we refer to \cite{lovasz2009matching}.

\paragraph*{Game-theoretic aspects of matching problems.}
Matching problems have been extensively studied from a game-theoretical perspective in various contexts, including one-sided matching (such as house allocation \cite{HZ79,10.1257/000282802762024728} and the resident/hospital problem \cite{manlove2013algorithmics}) and two-sided matching (such as stable marriage \cite{GaleS13}, student admission \cite{balinski1999tale}, PhD grants assignment \cite{cechlarova2019problem}, and market matching \cite{kojima2019new}). 
Numerous variants of these basic models have been proposed, often with meaningful constraints such as regional constraints \cite{kamada2017recent,goto2016strategyproof}, diversity constraints \cite{marcolino2014give,10.5555/3306127.3331717,stablematchdiversityNP,ehlers2014school}, or lower quotas \cite{boehmer2022fine,yokoi2017generalized}. 
Perhaps the most generic variation on this framework has been studied in \cite{fadaei2017truthful} and \cite{azar2015truthful}, where the authors adapt the game-theoretical framework to the class of Generalized Assignment Problems.
In every setting containing self-interested agents, it is important to assess the impact that the mechanism has on the social problem.
Consequently, several papers have examined the social aspects of these mechanisms, including manipulability \cite{fragiadakis2016strategyproof,krysta2014size,chakrabarty2014welfare}, fairness \cite{kamada2023fair}, and envy-freeness \cite{abdulkadiroǧlu2020efficiency,abdulkadirouglu2003school}.
For a comprehensive survey of the game-theoretical properties of mechanisms in matching problems, we refer the reader to \cite{aziz2022matching}.
All these works, however, assume that the values of the tasks are subjective, meaning that the same task may have a different value for different agents.
Yet, in many cases, the value of the objects to be assigned is publicly known. 
Thus, we cannot model the agents' private information as their ordinal or cardinal preferences over the set of tasks. 
In this paper, we focus on this distinct class of problems.

\begin{table}[t]
\begin{center}
\caption{The mechanisms we study and their properties in the two game-theoretical settings we consider . The "Yes*" indicates that the mechanism is group strategyproof under minor assumptions. \label{table:schedule}}
    \begin{tabular}{c | c | c | c | c }
 \midrule[1.5pt]
    & Truthful & Group SP  & Optimal  & Efficiency \\
 \hline
 $\MB$   & No & No & Yes & PoA = PoS = 2  \\
   $\MD$  & No & No & Yes & PoA = PoS = 2 \\
 $\MG$   & Yes & Yes* & No & $ar$ = 2 \\
\midrule[1.5pt]
\end{tabular}

\end{center}
\end{table}

\paragraph*{Our Contributions.}
In this paper, we consider a game-theoretical framework in which the agents' private information is the set of edges connecting the agent to the tasks.
We assume that the values of the tasks are public and common to every agent.
Moreover, we assume that the agents are bounded by their statements so that agents can hide edges, but they cannot report non-existing edges.
In this framework, we study three mechanisms induced by known algorithms.
Two of them, namely $\MB$ and $\MD$, are induced by the algorithm proposed in \cite{spencer1984node}.
The third one, namely $\MG$, is induced by the algorithm proposed in \cite{dobrian20192}, which is an approximation of the one proposed in \cite{spencer1984node}.
First, we study the truthfulness and the group strategyproofness of these mechanisms.
Both $\MB$ and $\MD$ are optimal, but not truthful nor group strategyproof.
On the contrary, $\MG$ is not optimal, but truthful and, under mild assumptions, also group strageyproof.
We show that $\MG$ has an approximation ratio (ar) equal to $2$ and that this value is tight, i.e. no other deterministic and truthful mechanism can beat this constant value.
We then study the Nash Equilibrium induced by $\MB$ and $\MD$, we use these characterizations to prove that the Price of Anarchy (PoA) and the Price of Stability (PoS) of both mechanisms is equal to $2$.
As for $\MG$, these efficiency guarantees are tight: there does not exist a mechanism that can achieve a smaller PoA or PoS.
We then characterize some classes of inputs on which $\MB$ is truthful, while $\MD$ is not.
In particular, we infer that, despite their similarities, $\MB$ and $\MD$ do possess different game-theoretical properties.
Finally, we extend our results to the case in which the agents are able to report their capacity along with their edges.
In Table \ref{table:schedule}, we summarize our main findings.

\section{Preliminaries}
\label{sect:prelim}

In this section, we recall the basic notions on the MVbM problem and introduce the algorithm defining the mechanisms we study.

{\bf The Maximum Vertex-weighted $b$-Matching Problem.}
Let $G=(A \cup T,E)$ be a bipartite graph. 
Throughout the paper we refer to $A=\{a_1,a_2, \cdots, a_{n}\}$ as the set of agents and refer to $T=\{t_1,t_2, \cdots, t_{m}\}$ as the set of tasks.
The set $E$ contains the edges of the bipartite graph.
We say that an edge $e\in E$ belongs to agent $a_i$ if $e=(a_i,t_j)$ for a $t_j\in T$.
We denote with $M=|E|$ and $N=|A \cup T|=n+m$ the number of edges and the total number of vertices of the graph, respectively. 
Since the graphs are undirected, we denote an edge by $(a_i,t_j)$ and $(t_j,a_i)$ interchangeably. 
Moreover, for any given agent $a_i\in A$, we define the set $T_{E,i}$ as the set of tasks in $T$ that are connected to agent $a_i$ through the set of edges $E$.
When it is clear from the context which set $E$ we are referring to, we drop the subscript from $T_{E,i}$ and use $T_i$.
Let $\bb=(b_1,b_2,\dots,b_n)$ be the vector containing the capacities of the agents, where $b_i\in \mathbb{N}$ for every $i=1,\dots,n$.
A subset $\mu\subset E$ is a $b$-matching if, for every vertex $a_{i}\in A$, the number of edges in $\mu$ linked to $a_i$ is less than or equal to $b_i$ and, for every vertex $t_j\in T$, the number of edges in $\mu$ linked to $t_j$ is, at most, one.
Given a $b$-matching $\mu$, the vertex $a_i \in A$ is \emph{saturated} with respect to $\mu$ if the number of edges in $\mu$ linked to $a_i$ is exactly $b_i$.
Otherwise, the vertex $a_i$ is \emph{unsaturated} with respect to $\mu$. 
We denote with $\qq=(q_1,q_2,\dots,q_m)$ the vector containing the values of the tasks, in our framework, we have $q_j>0$ for every $j$. 
The value of a matching $\mu$ is $w(\mu):=\sum_{t_j\in T_{\mu}}q_{j}$, where $T_{\mu}\subset T$ is the set of tasks matched by $\mu$.
Given a bipartite graph and a vector $\bb$, the MVbM problem consists in finding a $b$-matching $\mu$ that maximizes  $w(\mu)$. 
Finally, given two sets of edges $\mu_1$ and $\mu_2$, denote with $\mu_1\oplus\mu_2=(\mu_1\backslash\mu_2)\cup(\mu_2\backslash\mu_1)$ their \emph{symmetric difference}.
A path $P = \{ (t_{j_1},a_{i_1}), (a_{i_1},t_{j_2}), (t_{j_2},a_{i_2}), \cdots , (t_{j_L}, a_{i_L}) \}$ in $G$ is a sequence of edges that joins a sequence of vertices. 
We say that $P$ has a length equal to $\lambda$ if it contains $\lambda$ edges.
Given a path $P$ and a $b$-matching $\mu$, $P$ is an \emph{augmenting path} with respect to $\mu$ if every vertex $a_{i_\ell}$, for $\ell=1,\dots,L-1$, is saturated with respect to $\mu$, $a_{i_L}\in A$ is unsaturated with respect to $\mu$, and the edges in the path alternatively do not belong to $\mu$ and belong to $\mu$. 
That is, $(t_{j_\ell},a_{i_\ell})\notin \mu$,
$\ell\in[L]$ and $(a_{i_\ell}, t_{j_{\ell+1}})\in \mu$, $\ell\in[L-1]$, where $[L]$ is the set containing the first $L$ natural numbers, i.e. $[L]:=\{1,2,\dots,L\}$.

{\bf The Algorithm Outline.}
In this paper, we consider two well-known algorithms from a mechanism design perspective.
The first algorithm is presented in \cite{spencer1984node} and its routine consists in defining a sequence of matchings, namely $\{\mu_j\}_{j=0,1,\dots,m}$, of increasing cardinality.
Indeed, the first matching is set as $\mu_0=\emptyset$ and, given $\mu_{j-1}$, $\mu_j$ is defined as follows: if there exists $P_j$ an augmenting path (with respect to $\mu_{j-1}$) that starts from $t_j$, then $\mu_j=\mu_{j-1}\oplus P_j$, otherwise $\mu_j=\mu_{j-1}$. 
In Algorithm \ref{alg:Maxvalue}, we sketch the routine of the algorithm.
In \cite{spencer1984node} it has been shown that Algorithm \ref{alg:Maxvalue} returns a solution to the MVbM problem in $O(NM)$ time, regardless of whether we find the augmenting path using the \emph{Breadth-First Search} (BFS) or the \emph{Depth-First Search} (DFS). 
Both the DFS and the BFS when they traverse the graph in search of an augmenting path implicitly assume that there is an ordering of the agents.
We say that agent $a_i$ has a higher \emph{priority} than agent $a_j$ if $a_i$ is explored before $a_j$.
In particular, $a_1$ has the highest priority.
We say that Algorithm \ref{alg:Maxvalue} is endowed with the BFS (or endowed with the DFS) if we use the BFS (or DFS) to find an augmenting path.
%
%

%
The second algorithm is introduced in \cite{dobrian20192} and \cite{al20222} and is an approximation version of Algorithm \ref{alg:Maxvalue} that searches only among the augmenting paths whose length is $1$.
The authors showed that this approximation algorithm finds a matching whose weight is, in the worst case, half the weight of the MVbM.
Notice that for this approximation algorithm, using the BFS or the DFS does not change the outcome thus we omit which traversing graph method is used.

\begin{algorithm}[tb]
\caption{Algorithm for MVbM, \cite{spencer1984node}}
\label{alg:Maxvalue}
\textbf{Input}: A bipartite graph $G=(A \cup T,E)$; agents' capacities $b_{i}, i=1,\cdots,n$; task weights $q_j, j=1,\cdots,m$.\\
\textbf{Output}:  An MVbM $\mu$.

\begin{algorithmic}[1] 
\STATE $\mu_0\leftarrow\emptyset$;
\STATE Sort $q_j$, $j=1,\cdots,m$, in decreasing order;
\FOR{each $j\in T$}
\IF {there is augmenting path $P$ starting from $j$ w.r.t. $\mu_{j-1}$}
\STATE $\mu_j=\mu_{j-1}\oplus P$;
\ELSE
\STATE $\mu_{j}=\mu_{j-1}$;
\ENDIF
\ENDFOR
\STATE \textbf{return} $\mu=\mu_{m}$;
\end{algorithmic}
\end{algorithm}

\section{The Game-Theoretical framework }
\label{subsec:gtfwagent}

%
%
%

%
In this section, we formally define the space of the agents' strategies and the mechanisms we study throughout the paper.
%
%

{\bf The Strategy Space of the Agents.}
Given a bipartite graph $G=(A\cup T, E)$, a capacity vector $\bb$, a value vector $\qq$, and a $b$-matching $\mu$ over $G$, we define the social welfare achieved by $\mu$ as the total value of the tasks assigned to the agents.
Since the social welfare achieved by $\mu$ is equal to the total weight of the matching, we use $w(\mu)$ to denote it.
We then define the utility of agent $a_i$ as $w_i(\mu):=\sum_{t_j\in T_{\mu,i}} q_{j}$, 
where $T_{\mu,i}$ is the set of tasks that $\mu$ matches with agent $a_i$.
Notice that $w(\mu)=\sum_{i=1}^{n}w_i(\mu)$.
In this paper, we focus on two settings: 
\begin{enumerate*}[label=(\roman*)]
    \item The private information of each agent consists of the set of edges that connect the agent to the tasks' set. We call this setting Edge Manipulation Setting (EMS).
    \item The private information of each agent consists of the set of edges and its own capacity. We call this setting Edge and Capacity Manipulation Setting (ECMS).
\end{enumerate*}
In both cases, we assume that every agent is bounded by its statement, thus every agent is able to report incomplete information, but they are not allowed to report false information.
In EMS, this means that an agent can hide some of the edges that connect it to the tasks, but it cannot report an edge that does not exist.
The set of strategies of agent $a_i$, namely $\mathcal{S}_i$, is, therefore, the set of all the possible non-empty subsets of $T_i$, where $T_i$ is the set containing all the existing edges that connect $a_i$ to the tasks.
In ECMS, this means that an agent can report only a capacity that is lower than its real one and that it cannot report an edge that does not exist.
In this case, the set of strategies of agent $a_i$ is then $\mathcal{S}_i\times [b_i]$, where $b_i$ is the real agent's capacity.
%

%
For the sake of simplicity, \textbf{from now on all the results and definitions we introduce are for the EMS}, unless we specify otherwise.
We generalize our results to the ECMS in Section \ref{subsect:ECMS}.
{\bf The Mechanisms.}
A mechanism for the MVbM problem is a function $\mathbb{M}$ that takes as input the private information of the agents and returns a $\bb$-matching.
We denote with $\mathcal{I}_{\mathbb{M}}$ the set of  possible inputs for $\mathbb{M}$ in the EMS.
In our paper, we consider three mechanisms:
\begin{enumerate}[label=(\roman*)]
    \item $\MB$, which takes in input the edges of the agents and uses Algorithm \ref{alg:Maxvalue} endowed with the BFS to select a matching\footnote{Throughout the paper, we use $\MB$ to denote both the mechanism that takes as input the edges and both the edges and the capacity of every agent. It will be clear from the context which is the input of the mechanism. The same goes for the other two mechanisms.}.
    \item $\MD$, which takes in input the edges of the agents and uses Algorithm \ref{alg:Maxvalue} endowed with the DFS to select a matching.
    \item $\MG$, which takes in input the edges of the agents and uses the approximated version of Algorithm \ref{alg:Maxvalue} to select a matching.
\end{enumerate}
Notice that $\MB$ and $\MD$ are optimal since they are both induced by Algorithm \ref{alg:Maxvalue}.
Given a mechanism $\mathbb{M}$, every element $I\in \mathcal{I}_{\mathbb{M}}$ is composed by the reports of $n$ self-interested agents, so that $\mathcal{I}_{\mathbb{M}}=\otimes_{i=1}^n\mathcal{I}_{i}$, where $\mathcal{I}_i$ is the set of feasible inputs for agent $a_i$.
We say that a mechanism $\mathbb{M}$ is \textit{strategyproof} (or, equivalently, \textit{truthful})  for the EMS if no agent can get a higher payoff by hiding edges.
More formally, if $I_i$ is the true type of agent $a_i$, it holds true that 
\[
w_i(\mathbb{M}(I'_i,I_{-i}))\le w_i(\mathbb{M}(I_i,I_{-i}))
\]
for every $I'_i\in \mathcal{S}_i$.
Another important property for mechanisms is the group strategyproofness.
A mechanism is \textit{group strategyproof} for agent manipulations if no group of agents can collude to hide some of their edges in such a way that
\begin{enumerate*}[label=(\roman*)]
    \item the utility obtained by every agent of the group after hiding the edges is greater than or equal to the one they get by reporting truthfully,
    \item at least one agent gets a better payoff after the group hides the edges.
\end{enumerate*}

To evaluate the performances of the mechanisms, we use the Price of Anarchy (PoA), Price of Stability (PoS), and approximation ratio (ar), which we briefly recall in the following.
\textbf{Price of Anarchy.}
%
%
The Price of Anarchy (PoA) of mechanism $\mathbb{M}$ is defined as the maximum ratio between the optimal social welfare and the  welfare in the worst Nash Equilibrium, hence 
\[
PoA(\mathbb{M}):=\sup_{I\in \mathcal{I}}\frac{w(\mu(I))}{w(\mu_{wNE}(I))},
\]
where $\mu_{wNE}(I)$ is the output of $\mathbb{M}$ when the agents act according to the worst Nash Equilibrium, i.e. the Nash Equilibrium that achieves the worst social welfare.

\textbf{Price of Stability.}
The Price of Stability (PoS) of a mechanism $\mathbb{M}$ is defined as the maximum ratio between the optimal social welfare and the welfare in the best Nash Equilibrium, hence 
\[
PoS(\mathbb{M}):=\sup_{I\in \mathcal{I}}\frac{w(\mu(I))}{w(\mu_{bNE}(I))},
\]
where $\mu_{bNE}(I)$ is the output of $\mathbb{M}$ when the agents act according to the best Nash Equilibrium, i.e. the Nash Equilibrium that achieves the maximum social welfare.
Notice that, by definition, we have $PoS(\mathbb{M})\le PoA(\mathbb{M})$.

\textbf{Approximation Ratio.} 
The approximation ratio of a truthful mechanism $\mathbb{M}$ is defined as the maximum ratio between the optimal social welfare and the welfare returned by $\mathbb{M}$.
Hence, we have 
\[
ar(\mathbb{M}):=\sup_{I\in \mathcal{I}}\frac{w(\mu(I))}{w(\mu_{\mathbb{M}}(I))},
\]
where $\mu_{\mathbb{M}}(I)$ is the output of $\mathbb{M}$ when $I$ is given in input.
%


\section{The Truthfulness of the Mechanisms}
\label{sec:game_theoretical_setting}

In this section, we study the truthfulness of the three mechanisms induced by Algorithm \ref{alg:Maxvalue} and its approximation version. 
We show that, although $\MB$ and $\MD$ are optimal, they are not truthful due to an impossibility result.
Furthermore, we show that the manipulability of a mechanism is related to the length of the augmenting paths found during the routine of Algorithm \ref{alg:Maxvalue} and use this characterization to prove that $\MG$ is truthful.

\begin{thm}
\label{th:no_strategyproof_agent}
There is no deterministic truthful mechanism that always returns an MVbM.
\end{thm}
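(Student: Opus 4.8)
The plan is to refute the existence of such a mechanism with a single minimal instance. First I would fix $n=2$ agents and $m=2$ tasks, set unit capacities $b_1=b_2=1$, choose task values $q_1>q_2>0$, and take the true edge sets to be $T_1=T_2=\{t_1,t_2\}$, so that truthful reporting yields the complete bipartite graph on two vertices per side. Since any $b$-matching that leaves a task unmatched has weight at most $q_1<q_1+q_2$, the only optimal matchings on this instance are $\mu=\{(a_1,t_1),(a_2,t_2)\}$ and $\mu'=\{(a_1,t_2),(a_2,t_1)\}$. Hence a deterministic mechanism $\mathbb{M}$ that always returns an MVbM must select exactly one of these two on this input.

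Next I would branch on that choice. If $\mathbb{M}$ outputs $\mu$, then agent $a_2$ receives utility $q_2$; I would let $a_2$ deviate to the report $\{t_1\}$. On the resulting graph, the matching $\{(a_1,t_2),(a_2,t_1)\}$ has weight $q_1+q_2$, while every other $b$-matching has weight at most $q_1$, so it is the unique MVbM and $\mathbb{M}$ must return it, giving $a_2$ utility $q_1>q_2$ — a profitable manipulation. Symmetrically, if $\mathbb{M}$ outputs $\mu'$, then $a_1$ receives utility $q_2$, and letting $a_1$ report $\{t_1\}$ forces the unique optimum $\{(a_1,t_1),(a_2,t_2)\}$, raising $a_1$'s utility from $q_2$ to $q_1$. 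In either case $\mathbb{M}$ is not truthful, contradicting the hypothesis, which proves the theorem.

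I expect the only points needing care to be: (i) verifying uniqueness of the optimal matching on each post-deviation instance, which is immediate because once an agent is restricted to a single task, matching that task through the \emph{other} agent strictly dominates every alternative; and (ii) treating both possible choices of $\mathbb{M}$ explicitly rather than invoking symmetry, since $\mathbb{M}$ is not assumed to handle the two agents symmetrically. Neither is a genuine obstacle, and the deliberately small construction (two agents, two tasks, unit capacities) indicates the impossibility is robust and does not rely on large capacities or many tasks.
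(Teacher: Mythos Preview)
Your proposal is correct and follows essentially the same approach as the paper: exhibit a small instance with exactly two optimal matchings, branch on which one the deterministic optimal mechanism selects, and in each branch show that the agent receiving the lower-valued task can hide an edge to force the unique optimum on the reduced graph, thereby strictly improving its utility. The only difference is that your construction is slightly more economical (two tasks and the complete bipartite graph $K_{2,2}$, versus the paper's three tasks with an incomplete edge set), but the logic is identical.
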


\begin{proof}
We show this using a counterexample.
Consider two agents $a_1$ and $a_2$ and three tasks $t_1$, $t_2$, $t_3$.
The edge set is $E=\{(a_1,t_1), (a_1,t_2),(a_2,t_1), (a_2,t_3)\}$. 
The values of the three tasks are $q_1 = 1, q_2 = 0.1$, and $q_3=0.1$, respectively, while the capacity of both agents is $b_1=b_2=1$. 
It is easy to see that the optimal matching is not unique. 
In particular, both $\{(a_1,t_1)$, $(a_2,t_3)\}$ and $\{(a_1,t_2),(a_2,t_1)\}$ are feasible solutions whose total weight is $1.1$.
Let us assume that the mechanism returns $\{(a_1,t_1)$, $(a_2,t_3)\}$. 
In this case, if agent $a_2$ does not report edge $(a_2,t_3)$, the maximum matching becomes $\{(a_1,t_2),  (a_2,t_1)\}$.
According to this, agent $a_2$'s utility increases from $0.1$ to $1$.
Similarly, if the mechanism returns $\{(a_1,t_2),(a_2,t_1)\}$, agent $a_1$ can manipulate the result by hiding the edge $(a_1,t_2)$.
Therefore, there is no deterministic truthful mechanism that always returns a maximum matching.
\end{proof}

From Theorem \ref{th:no_strategyproof_agent}, we infer that both $\MB$ and $\MD$ are not truthful and, thus, not group strategyproof.
In the following, we characterize a sufficient condition under which agents' best strategy is to report truthfully.
This characterization will allow us to deduce that $\MG$ is truthful and to present sets of instances on which $\MB$ and $\MD$ are truthful.

\begin{lemma}
\label{thm:truthfulness_cases}
Let us consider an instance in which Algorithm \ref{alg:Maxvalue} completes its routine using only augmenting paths of length equal to $1$. 
If an agent cannot misreport edges in such a way that the algorithm will find an augmenting path of length greater than $1$ in its implementation, then the agent's best strategy is to report truthfully.
\end{lemma}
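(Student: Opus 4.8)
The plan is to fix an agent $a_i$ whose true type is $I_i$ and an arbitrary misreport $I_i'\in\mathcal{S}_i$, and compare the run of Algorithm~\ref{alg:Maxvalue} on the truthful profile $(I_i,I_{-i})$ with its run on $(I_i',I_{-i})$. By hypothesis, on $(I_i,I_{-i})$ the algorithm completes using only length-$1$ augmenting paths, and by the second hypothesis the same is true on $(I_i',I_{-i})$ (since $a_i$'s misreport is assumed not to create a longer augmenting path; the other agents' reports are fixed, so no new long path can arise from them either, as every augmenting path must leave and re-enter through edges, and an edge incident to $a_i$ is the only place a discrepancy can occur). This is the key simplification: when only length-$1$ augmenting paths are used, Algorithm~\ref{alg:Maxvalue} degenerates into a greedy procedure that processes the tasks $t_1,\dots,t_m$ in decreasing order of value and, for each $t_j$, assigns it to the highest-priority agent still connected to $t_j$ who is not yet saturated (and otherwise leaves $t_j$ unmatched). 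I would first state and justify this greedy reformulation as the backbone of the argument.

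Next I would track, step by step through the processing order $t_1,t_2,\dots,t_m$, the effect of deleting edges incident to $a_i$. The core claim is a monotonicity/coupling statement: at every stage $j$, the set of tasks $a_i$ has been assigned so far under the truthful report weakly dominates (in the sense of being a pointwise-at-least-as-valuable prefix) the set under the misreport. To make this precise I would argue by induction on $j$. When $t_j$ is processed: if under the misreport $t_j$ gets assigned to $a_i$, then under the truthful report the edge $(a_i,t_j)$ is also present, so either $a_i$ also receives $t_j$ truthfully, or $a_i$ is already saturated truthfully — and in the latter case, by the inductive hypothesis, $a_i$ has already accumulated a set of tasks at least as good. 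The only subtlety is that deleting edges of $a_i$ can change which agent some \emph{other} task $t_k$ ($k<j$) went to, freeing up capacity of a third agent; but this cannot help $a_i$ acquire a task $t_j$ in the misreport that it fails to acquire truthfully, because the set of agents competing with $a_i$ for $t_j$ with higher priority is unaffected by $a_i$'s own edge deletions (their connections to $t_j$ are in $I_{-i}$), and their saturation status only weakly improves in $a_i$'s favor when $a_i$ is more connected. I expect the bookkeeping here — showing that the higher-priority competitors for each task are saturated no later under the truthful profile — to be the main obstacle, and I would handle it with a careful secondary induction or an exchange argument on the sequence of assignments.

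Finally, from the coupling I would conclude that the multiset of tasks assigned to $a_i$ under $(I_i,I_{-i})$ is at least as valuable, task-by-task in sorted order, as the multiset assigned under $(I_i',I_{-i})$, and since all $q_j>0$ this gives $w_i(\mathbb{M}(I_i',I_{-i}))\le w_i(\mathbb{M}(I_i,I_{-i}))$, which is exactly the truthfulness inequality. As $\mathbb{M}$ here is Algorithm~\ref{alg:Maxvalue} with either traversal rule (they coincide when only length-$1$ paths are used, so the priority tie-breaking is the only thing that matters and it is fixed), the statement follows. I would remark that the hypothesis "cannot misreport to create a longer augmenting path" is exactly what prevents the known counterexample of Theorem~\ref{th:no_strategyproof_agent} from applying, since there the profitable manipulation works precisely by forcing the algorithm onto a length-$2$ augmenting path.
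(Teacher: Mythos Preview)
Your plan is correct and follows the same high-level idea as the paper --- reduce to the greedy picture (length-$1$ paths mean the matching sequence is monotone, so each step simply hands $t_j$ to the highest-priority unsaturated neighbor) and then argue that deleting edges incident to $a_i$ cannot help $a_i$. The paper's own proof stops essentially at this reduction and asserts in one sentence that ``by hiding one or more edges, the manipulative agent can only reduce the total value of the tasks assigned to it''; it does not engage with the cascading-saturation issue you flag. Your coupling/induction plan is the honest way to discharge that sentence.

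One simplification worth noting: the secondary induction you anticipate is easier than you suggest. Because the greedy rule checks agents in priority order, at every step $j$ the decision of whether $t_j$ goes to some $a_\ell$ with $\ell<i$ depends only on the edges and saturation of $a_1,\dots,a_\ell$, none of which involve $a_i$'s report. Hence the assignments to all higher-priority agents are \emph{identical} (not merely weakly ordered) under the truthful and the misreported profile; you get equality, not just an inequality, and no exchange argument is needed. With that in hand, your prefix-domination claim for $a_i$ follows directly: whenever $a_i$ receives $t_j$ under the misreport, under truth either $a_i$ also receives $t_j$ or $a_i$ is already saturated with $b_i$ earlier (hence weakly more valuable) tasks. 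Summing gives $w_i(\mathbb{M}(I_i',I_{-i}))\le w_i(\mathbb{M}(I_i,I_{-i}))$ as you outline.
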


\begin{proof}

%
Let us denote with $\mu_k$ the matching found at the $k$-th step.
It is easy to see that Algorithm \ref{alg:Maxvalue} concludes its routine using only augmenting paths of length equal to $1$ if and only if the sequence of matching that it finds is monotone increasing. That is, $\mu_k\subset \mu_{k+1}$ for every $k$.
Since the sequence $\{\mu_k\}_k$ is increasing, the matching $\mu_{k+1}$ is defined by adding (at most) an edge to $\mu_k$. 
%
Let us now assume that an agent hides a set of edges in such a way that the matching sequence found by the algorithm is still monotone increasing.
Thus, after the manipulation, the final matching is still obtained by adding (at most) an edge at every iteration. 
By the definition of $\MB$ and $\MD$, the edge added at step $j$, is taken from the ones that are connected to the task $t_j$.
Therefore, by hiding one or more edges, the manipulative agent can only reduce the total value of the tasks assigned to it, which concludes the proof. 
\end{proof}

Since $\MG$ uses only augmenting paths of length at most equal to $1$, regardless of the input, Lemma \ref{thm:truthfulness_cases} allows us to conclude that $\MG$ is truthful.
Moreover, due to the results proven in \cite{dobrian20192}, we can characterize the approximation ration of $\MG$.

\begin{thm}
\label{thm:approx_truth}
The mechanism $\MG$ is truthful with respect to agent manipulation. 
Moreover, its approximation ratio is $2$.
\end{thm}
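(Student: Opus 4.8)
The plan is to establish the two halves of Theorem~\ref{thm:approx_truth} separately, since the truthfulness part is essentially immediate from the preceding lemma and the approximation part is imported from the literature.

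\textbf{Truthfulness.} First I would observe that, by construction, $\MG$ is induced by the approximation version of Algorithm~\ref{alg:Maxvalue}, which searches only among augmenting paths of length~$1$. Hence for \emph{every} input $I\in\mathcal{I}_{\MG}$ the algorithm completes its routine using only augmenting paths of length equal to $1$, and moreover no misreport by an agent can ever cause a longer augmenting path to be used, simply because the algorithm never looks for one. This means the hypotheses of Lemma~\ref{thm:truthfulness_cases} are satisfied vacuously for every instance and every agent, so the lemma directly yields that each agent's best strategy is to report truthfully; that is, $w_i(\MG(I'_i,I_{-i}))\le w_i(\MG(I_i,I_{-i}))$ for every $I'_i\in\mathcal{S}_i$ and every $i$. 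I would phrase this as a one-paragraph application of the lemma rather than reproving anything.

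\textbf{Approximation ratio.} For the bound $ar(\MG)=2$ I would split into the upper bound $ar(\MG)\le 2$ and a matching lower bound $ar(\MG)\ge 2$. The upper bound is exactly the worst-case guarantee of the length-$1$ approximation algorithm established in \cite{dobrian20192} (and \cite{al20222}): the matching returned has weight at least half the weight of an MVbM, so $w(\mu(I))/w(\mu_{\MG}(I))\le 2$ for all $I$, and since $\MG$ is truthful the agents report truthfully at equilibrium so this ratio is precisely the approximation ratio. For tightness I would exhibit a small family of instances on which the ratio approaches $2$: take one agent with capacity $1$ connected to two tasks, a "decoy" task of value $1$ that the greedy length-$1$ rule grabs first and a high-value task of value $2-\varepsilon$ that then becomes unreachable because extending to it would require an augmenting path of length greater than $1$; a second agent occupies the high-value task. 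Letting $\varepsilon\to 0$ drives the ratio to $2$. (Alternatively one can cite the tight example already given in \cite{dobrian20192}.)

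The only mild subtlety — and the step I would be most careful about — is making sure the lower-bound instance is genuinely an equilibrium outcome under truthful reporting and that the task-ordering in Algorithm~\ref{alg:Maxvalue} (which sorts values in decreasing order) does not let the greedy rule escape the bad configuration; this is handled by having the value-$1$ decoy and value-$(2-\varepsilon)$ task attached to agents whose connectivity forces the length-$1$ search to commit early. Everything else is bookkeeping, so the write-up should be short: one paragraph invoking Lemma~\ref{thm:truthfulness_cases}, one sentence citing the $\tfrac12$-approximation of \cite{dobrian20192}, and one small worked example (or a citation to the known tight instance) for optimality of the constant.
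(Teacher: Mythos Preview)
Your overall approach matches the paper's: invoke Lemma~\ref{thm:truthfulness_cases} for truthfulness, cite \cite{dobrian20192} for the $\tfrac12$-approximation upper bound, and give a small tight instance for the lower bound.

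The one genuine gap is your tight example. Because Algorithm~\ref{alg:Maxvalue} processes tasks in \emph{decreasing} order of value, your ``decoy'' of value $1$ cannot be ``grabbed first'': the task of value $2-\varepsilon$ is handled before it. If agent $a_1$ (capacity~$1$) is connected to both tasks and $a_2$ is connected only to the high-value task, then $a_1$ takes the $2-\varepsilon$ task by priority, the decoy is stranded, and $\MG$ achieves $2-\varepsilon$ against an optimum of $3-\varepsilon$, giving ratio $\to 3/2$, not $2$. No rewiring with those two values will push the ratio to $2$; the issue is that when the two task values differ by a factor close to $2$, losing the smaller one is not catastrophic enough. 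You flag the ordering issue yourself but the proposed ``fix'' does not actually resolve it.

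The paper's instance uses two tasks of \emph{nearly equal} value, $q_1=1+\varepsilon$ and $q_2=1$, with $a_1$ connected to both and $a_2$ connected only to $t_1$ (both capacities~$1$). Then $t_1$ is processed first and goes to $a_1$ by priority; $a_1$ is now saturated, $a_2$ is not adjacent to $t_2$, so $t_2$ is dropped. Output value $1+\varepsilon$, optimum $2+\varepsilon$, ratio $\to 2$. Replace your example with this one (or indeed just cite the tight instance) and the proof is complete.
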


The approximation ratio achieved by $\MG$ is actually the best possible ratio achievable, as the next result shows.

\begin{thm}
\label{thm:tight_bound_general}
  There does not exist a truthful and deterministic mechanism for the MVbM problem that achieves an approximation ratio better than $2$.  
\end{thm}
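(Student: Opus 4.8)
The plan is to exhibit a single family of instances (or even one fixed instance) on which no deterministic truthful mechanism can guarantee more than half the optimal social welfare, mirroring the counterexample used in Theorem~\ref{th:no_strategyproof_agent} but pushing the task values apart so that the ratio forced by a profitable manipulation approaches $2$. Concretely, I would take two agents $a_1,a_2$ with $b_1=b_2=1$ and three tasks with values $q_1=1$, $q_2=q_3=\varepsilon$ for a small $\varepsilon>0$, and edge set $E=\{(a_1,t_1),(a_1,t_2),(a_2,t_1),(a_2,t_3)\}$. The two maximal matchings are $\mu=\{(a_1,t_1),(a_2,t_3)\}$ and $\mu'=\{(a_1,t_2),(a_2,t_1)\}$, each of weight $1+\varepsilon$, and these are the only $b$-matchings of weight exceeding $2\varepsilon$. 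Any truthful mechanism must, on this truthful input, output one of the two; I will argue that in either case some agent has a profitable deviation unless the mechanism sacrifices a $(1+\varepsilon)/(2\varepsilon)$-fraction of the welfare on a sub-instance.

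The key step is the case analysis on what the mechanism outputs. Suppose $\mathbb{M}(E)=\mu=\{(a_1,t_1),(a_2,t_3)\}$. Consider agent $a_2$ deviating to $S_2=\{t_1\}$, i.e. hiding the edge $(a_2,t_3)$. On the reported graph the only matchings giving $a_2$ positive utility are those matching $a_2$ to $t_1$; by truthfulness applied \emph{to the deviated instance} (where $a_2$'s true type restricted to the report is just $\{t_1\}$) together with the requirement that $a_1$ also cannot benefit from further hiding, I will show $\mathbb{M}$ must either give $a_2$ task $t_1$ — yielding utility $1>\varepsilon$ for $a_2$, contradicting truthfulness of $\mathbb{M}$ at the original profile — or else output a matching of weight at most $2\varepsilon$ on that sub-instance while the optimum there is $1+\varepsilon$, so $ar(\mathbb{M})\ge (1+\varepsilon)/(2\varepsilon)$. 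The symmetric argument, with $a_1$ hiding $(a_1,t_2)$, handles the case $\mathbb{M}(E)=\mu'$. Letting $\varepsilon\to 0$ makes the forced approximation ratio unbounded in that branch, so the only way to keep $ar$ finite is to fall into the truthfulness contradiction; hence no finite bound better than $2$ survives once we also note that on this very instance the mechanism already cannot do better than $1+\varepsilon$ versus the "real" optimum — I would actually calibrate the instance so the clean bound $2$ drops out, e.g. by a slightly different gadget where the manipulated optimum is exactly twice the honest output.

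I expect the main obstacle to be bookkeeping the truthfulness constraints across the \emph{two different reported graphs} simultaneously: truthfulness is a statement about each agent's incentives holding the others' reports fixed, so I must be careful that the deviation I use for $a_2$ is evaluated against $\mathbb{M}$'s behavior on the profile $(S_2,E_{-2})$, and that the conclusion I draw there (that $a_2$ is matched to $t_1$, or welfare collapses) does not secretly assume something about $\mathbb{M}$ on yet a third profile. A clean way around this is to first prove a small structural lemma: on the deviated instance, the set of $b$-matchings is essentially $\{\mu'\} \cup \{\text{matchings of weight} \le 2\varepsilon\}$, so "$a_2$ gets positive utility" forces output $\mu'$, and then the whole argument becomes a two-line comparison of $a_2$'s utility under $\mu$ versus $\mu'$. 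With that lemma in hand the rest is immediate, and the tightness matches the upper bound $ar(\MG)=2$ from Theorem~\ref{thm:approx_truth}, so the constant $2$ is exactly the optimal truthful approximation ratio.
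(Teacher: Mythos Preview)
Your gadget does not deliver the bound, and the specific step ``either $a_2$ gets $t_1$ on the deviated instance, or the welfare there is at most $2\varepsilon$'' is false. On $E'=\{(a_1,t_1),(a_1,t_2),(a_2,t_1)\}$ the mechanism may simply output $\{(a_1,t_1)\}$: here $a_2$ gets nothing (so truthfulness at the original profile is respected, since $a_2$'s utility drops from $\varepsilon$ to $0$), $a_1$ gets its best possible task (so no further hiding by $a_1$ helps), and the welfare is $1$, not $2\varepsilon$. The optimum on $E'$ is only $1+\varepsilon$, so the ratio you can extract from the sub-instance is $(1+\varepsilon)/1\to 1$, not $(1+\varepsilon)/(2\varepsilon)\to\infty$. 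The same obstruction kills the symmetric case and also the missing cases $\mathbb{M}(E)\in\{\{(a_1,t_1)\},\{(a_2,t_1)\}\}$, which you overlooked (these have weight $1>2\varepsilon$ and ratio $<2$ on $E$ itself). Your closing remark that you would ``calibrate the instance'' is exactly the missing idea, not a detail.

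The structural problem is that your instance has one valuable task and two negligible ones; once the valuable task is assigned, excluding an agent costs only $\varepsilon$, so truthfulness constraints never force a factor-$2$ loss. The paper's construction avoids this by taking \emph{two} tasks of comparable value, $q_1=1+\varepsilon$ and $q_2=1$, with both agents connected to both. A mechanism with ratio below $2$ must then match both tasks; say $a_1$ gets $t_2$. On the instance $E'=\{(a_1,t_1),(a_2,t_1),(a_2,t_2)\}$, truthfulness at $E$ forbids $a_1$ from receiving $t_1$ (that would be a profitable deviation worth $1+\varepsilon>1$), so $a_1$ gets nothing on $E'$; but the optimum on $E'$ is still $2+\varepsilon$ while the mechanism achieves at most $1+\varepsilon$. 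The point is that the excluded agent is \emph{needed} to cover a high-value task, which is precisely what your $\varepsilon$-valued tasks $t_2,t_3$ fail to arrange.
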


\begin{proof}
    Toward a contradiction, assume that there exists a mechanism $\mathbb{M}$ whose approximation ratio is equal to $2-\delta$, where $\delta$ is a positive value.
    Let us now consider the following instance.
    We have two agents, namely $a_1$ and $a_2$, and two tasks, namely $t_1$ and $t_2$.
    The capacity of both agents is set to be equal to $1$.
    The value $q_1$ of $t_1$ is $1+\epsilon$ and the value $q_2$ of $t_2$ is equal to $1$.
    Finally, we assume that, according to their truthful inputs, both agents are connected to both tasks, so that $E=\{(a_1,t_1),(a_1,t_2),(a_2,t_1),(a_2,t_2)\}$.
    It is easy to see that the maximum value that a matching can achieve is $2+\epsilon$. 
    If the mechanism $\mathbb{M}$ does not allocate both the tasks, we have that the value achieved by the mechanism is, at most $1+\epsilon$.
    Therefore, we have that the approximated ratio of $\mathbb{M}$ is at least equal to $\frac{2+\epsilon}{1+\epsilon}=1+\frac{1}{1+\epsilon}$.
    If we take $\epsilon <\frac{\delta}{1-\delta}$, we get that the approximated ratio of $\mathbb{M}$ should be greater than $2-\delta$, which is a contradiction.
    Hence $\mathbb{M}$ allocates both the tasks in the previously described instance.
    Without loss of generality, let us assume that $\mathbb{M}$ allocates $t_1$ to $a_2$ and $t_2$ to $a_1$.
    Let us now consider the instance in which $a_1$ is not connected with the task $t_2$, so that $E'=\{(a_1,t_1),(a_2,t_1),(a_2,t_2)\}$.
    The maximal value that a matching can achieve is still $2+\epsilon$.
    However, since the mechanism is truthful, the agent $a_1$ cannot receive any task.
    Indeed, the only task that $\mathbb{M}$ can assign to $a_1$ is $t_1$, however, if $\mathbb{M}$ assigns $t_1$ to $a_1$, it means that agent $a_1$ can manipulate $\mathbb{M}$ by reporting the set of edges $\{(a_1,t_1)\}$ over $\{(a_1,t_1),(a_1,t_2)\}$ in the instance when the truthful input is $E$.
    Since the first agent cannot receive any tasks from $\mathbb{M}$, we have that the maximum matching value achieved by $\mathbb{M}$ when the input is $E'$ is, at most, $1+\epsilon$, so that the approximation ratio of $\mathbb{M}$ is, at least $\frac{2+\epsilon}{1+\epsilon}$.
    Again, by taking $\epsilon<\frac{\delta}{1-\delta}$, we conclude a contradiction.
    \end{proof}

From Theorem \ref{thm:tight_bound_general}, we then infer that $\MG$ is the best possible deterministic and truthful mechanism for our game theoretical setting.
To conclude, we show that, $\MG$ is also  group strategyproof if all the tasks have different values.

\begin{thm}
\label{thm:strategyproof}
    If all the tasks in $T$ have different values, then $\MG$ is group strategyproof.
\end{thm}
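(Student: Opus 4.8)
The plan is to exploit a clean serial-dictatorship description of $\MG$ and then run an induction on the number of agents, invoking the distinctness of the task values at a single, decisive point. \textbf{A structural description of $\MG$.} Recall that $\MG$ processes the tasks in order of non-increasing value and assigns each task to the unsaturated agent of highest priority that reports an edge to it (a length-$1$ augmenting path from a task $t$ is just an edge from $t$ to an unsaturated agent, and the traversal explores these neighbours in the priority order). I would first prove a ``consistency'' fact: for every $i$, the restriction to the agents $a_{1},\dots,a_{i-1}$ of the matching produced by $\MG$ equals the matching produced by $\MG$ run on the sub-instance containing only those agents. The reason is that whether a task is taken by one of $a_{1},\dots,a_{i-1}$ — and by which one — depends only on their reports and on their residual capacities, and these in turn depend only on which earlier tasks went to the same set of agents; an induction on the task order makes this precise. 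Consequently, writing $R_{i}$ for the set of tasks \emph{not} assigned to $a_{1},\dots,a_{i-1}$ and $\hat T_{i}$ for the report of $a_{i}$, the bundle $\MG$ gives $a_{i}$ consists of the $\min\{b_{i},|R_{i}\cap\hat T_{i}|\}$ highest-value tasks of $R_{i}\cap\hat T_{i}$, and $R_{i+1}=R_{i}\setminus(\text{bundle of }a_{i})$.

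\textbf{Where distinct values enter.} I would then isolate the following elementary lemma: given a finite set $R$ of tasks with pairwise distinct positive values, a bound $b$, and sets $\hat T_{i}\subseteq T_{i}$, the set $B$ of the $\min\{b,|R\cap T_{i}|\}$ highest-value tasks of $R\cap T_{i}$ is the \emph{unique} value-maximizing subset of $R\cap T_{i}$ of size at most $b$; hence the bundle $\hat B$ obtained from the report $\hat T_{i}$ (the top $\le b$ tasks of $R\cap\hat T_{i}$) has total value at most that of $B$, with equality if and only if $\hat B=B$. Pairwise distinctness is exactly what forbids an agent from switching to a \emph{different} equal-value bundle — the loophole that, in general, breaks group strategyproofness of this kind of mechanism.

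\textbf{The induction.} Argue by induction on the number of agents $n$ (the case $n=0$ is vacuous; for $|C|=1$ one may instead simply invoke the truthfulness of $\MG$ from Theorem~\ref{thm:approx_truth}). Suppose a coalition $C\neq\emptyset$ deviates so that every member is weakly better off and at least one is strictly better off, and let $a_{j}$ be the member of $C$ of highest priority. Then $a_{1},\dots,a_{j-1}\notin C$, so they report truthfully, so by the consistency fact the set $R_{j}$ is the same in the truthful run and in the manipulated run. Applying the lemma of the previous step to $a_{j}$ (whose report is contained in $T_{j}$ and who is weakly better off) forces the bundle of $a_{j}$ to be unchanged, hence $R_{j+1}$ is unchanged as well. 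Now, again by the consistency fact, the outcome of $\MG$ on the tasks of $R_{j+1}$ coincides with the outcome of $\MG$ on the sub-instance with agents $\{a_{j+1},\dots,a_{n}\}$ and task set $R_{j+1}$ (still with pairwise distinct values); this sub-instance is identical in the two runs, the induced deviation is carried out by $C\setminus\{a_{j}\}$, and each of its members is still weakly better off since their utilities agree with those in the original runs. By the induction hypothesis no member of $C\setminus\{a_{j}\}$ changes its utility, and neither does $a_{j}$, so no member of $C$ does — contradicting that some member strictly gains.

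\textbf{Main difficulty.} The technical heart is the consistency fact and its use: $\MG$ iterates over \emph{tasks} in value order rather than over agents, so one must argue carefully that ``freezing'' the top-priority agents $a_{1},\dots,a_{j-1}$, and then recursing on the sub-instance formed by the agents below $a_{j}$ and the tasks they may still receive, genuinely decouples the process as claimed. The only nuisance is that a coalition member's report may have empty intersection with $R_{j+1}$; this is harmless, since such an agent simply receives nothing in the sub-instance, which is consistent with being weakly better off.
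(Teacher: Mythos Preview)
Your proof is correct and takes a genuinely different route from the paper's. The paper argues directly: since hiding a non-matching edge never alters the output of $\MG$, one may assume some coalition member $a_{i_1}$ hides an edge $(a_{i_1},t_l)$ that \emph{is} in the truthful matching; then $a_{i_1}$ loses $t_l$ and, because ``each task is allocated independently from the others'', cannot pick up a better task, so by value-distinctness $a_{i_1}$ is strictly worse off and the coalition fails. Your approach instead makes the serial-dictatorship structure of $\MG$ explicit (this is essentially the content of Theorem~\ref{thm:NE_and_approx}) and inducts down the priority order, peeling off the top-priority coalition member $a_j$ and showing that its bundle --- hence its utility and the residual $R_{j+1}$ --- is literally unchanged. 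What your approach buys is rigor at the delicate point: the paper's ``independence'' assertion is only safe once one knows that all agents of higher priority than the chosen member report truthfully, i.e.\ once $a_{i_1}$ is read as the highest-priority coalition member (otherwise a lower-priority member can hide a matching edge yet gain a better task freed up by a co-conspirator above it); your induction builds this in explicitly and moreover shows every coalition member's payoff is unchanged rather than merely exhibiting one that is hurt. What the paper's approach buys is brevity, since with the right choice of $a_{i_1}$ no recursion onto a sub-instance is needed.
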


\begin{proof}
    Toward a contradiction, let us assume that there exists a coalition of agents $C=\{a_{i_1},\dots,a_{i_\ell}\}$ that is able to collude.
    Since hiding edges that are not returned by $\MG$ does not alter the outcome of the mechanism, we assume, without loss of generality, that at least agent $a_{i_1}$, hides one of the edges that are in the matching found by $\MG$ when all the agents report truthfully.
    Let us denote with $t_l$ the task connected to $a_{i_1}$ through the hidden edge.
    After misreporting agent $a_{i_1}$ cannot be allocated with $t_l$.
    Furthermore, due to the routine of $\MG$, each task is allocated independently from the others, hence $a_{i_1}$ cannot be allocated with a better task.
    Finally, since there are no tasks with the same value, $a_{i_1}$'s payoff is necessarily lowered by misreporting, even if in a coalition, which is a contradiction.
    %
    %
\end{proof}

The condition of Theorem \ref{thm:strategyproof} are tight.
Indeed, as we show in the next example, even if just two tasks have the same value, the mechanism is no longer group strategyproof.

\begin{example}
\label{ex:collusion}
Let us consider the following instance. 
The set of agents is composed of three elements, namely $a_1$, $a_2$, and $a_3$.
The capacity of each agent is set to $1$, so that $b_1=b_2=b_3=1$.
The set of tasks is composed of two elements, namely $t_1$ and $t_2$.
The value of both tasks is equal to $1$.
Finally, let $E=\{(a_1,t_1),(a_1,t_2),(a_2,t_2),(a_3,t_1)\}$ be the truthful input.
Then, $\MG(E)=\{(a_1,t_1),(a_2,t_2)\}$.
However $a_1$ and $a_3$ can collude: if agent $a_1$ hides the edge $(a_1,t_1)$, the $\MG$ returns $\mu'=\{(a_1,t_2),(a_3,t_1)\}$. 
\end{example}

\section{The Price of Anarchy and the Price of Stability}

\label{sect:strategy}

In this section, we study to what extent an agent can manipulate $\MB$ or $\MD$ in its favor. 
First, we show that an agent who is not matched to any task when reporting truthfully cannot improve its utility by manipulation.

\begin{lemma}
\label{lemma:nopayoff}
Given a truthful input, if an agent receives a null utility from $\MB$, then its utility cannot be improved by hiding edges.
The same holds for the mechanism $\MD$.
\end{lemma}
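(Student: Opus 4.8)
\textbf{Proof plan for Lemma \ref{lemma:nopayoff}.}

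The plan is to argue by contradiction using the structure of the matchings produced by Algorithm \ref{alg:Maxvalue}. Suppose agent $a_i$ receives null utility when reporting truthfully, i.e. in the run on the true edge set $E$ the agent $a_i$ is unsaturated and in fact matched to no task at all. Assume, toward a contradiction, that $a_i$ can report a proper subset $E'_i \subseteq T_i$ of its edges so that, in the run of $\MB$ on the modified input $E' = (E'_i, E_{-i})$, agent $a_i$ is matched to at least one task, hence obtains strictly positive utility. The key observation I would exploit is that hiding edges only shrinks the set of edges available to the algorithm: every augmenting path available in the run on $E'$ starting from task $t_j$ is also a path in the graph $G=(A\cup T, E)$, and $a_i$'s out-edges used on $E'$ are a subset of its true out-edges.

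The central step is to compare the two executions task by task, in the common order $t_1, t_2, \dots, t_m$ (sorted by decreasing value, which is the same for both runs since $\qq$ and $E_{-i}$ are untouched). I would show by induction on $j$ that the run on $E$ does at least as well as the run on $E'$, in a sense strong enough to force a contradiction. Concretely, let $\mu_j$ and $\mu'_j$ be the matchings after processing $t_1,\dots,t_j$ in the runs on $E$ and $E'$ respectively. The natural invariant is that the set of tasks matched by $\mu_j$ contains (or can be taken to contain, after suitable re-routing along alternating paths) the set of tasks matched by $\mu'_j$; equivalently, if $t_j$ becomes matched in the $E'$-run, then $t_j$ is also matched in the $E$-run. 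Since in the true run $a_i$ is matched to nothing, every task that the $E'$-run assigns to $a_i$ must, in the true run, be assigned to some other agent via an augmenting path that does not touch $a_i$'s extra edges; the presence of $a_i$'s hidden edges in $E$ can only enlarge the collection of augmenting paths and thus can only increase the set of matched tasks. Therefore every task matched in the $E'$-run is matched in the true run, and in particular the total weight of $\mu_m$ is at least that of $\mu'_m$. Combined with the fact that $\MB$ is optimal (it always returns a maximum-weight matching on whatever graph it is given), this pins down that $\mu'_m$ cannot match any task that $\mu_m$ leaves unmatched, but $\mu_m$ leaves $a_i$ unmatched while $\mu'_m$ matches $a_i$ --- and removing $a_i$'s edges from $E$ to obtain $E'$ cannot create a task-coverage that the larger graph did not already admit. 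This yields the contradiction.

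The same argument applies verbatim to $\MD$: nothing in the reasoning uses whether the augmenting path is found by BFS or DFS --- only that Algorithm \ref{alg:Maxvalue} processes tasks in decreasing-value order and that shrinking the edge set shrinks the family of augmenting paths. So once the claim is established for $\MB$, the statement for $\MD$ follows by the identical proof.

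I expect the main obstacle to be making the task-by-task monotonicity argument fully rigorous: augmenting along a path in the $E'$-run can re-route earlier assignments, so the inductive invariant must be phrased in terms of the \emph{set of matched tasks} (a monotone quantity) rather than the edge sets themselves, and one must be careful that an augmenting path available to the $E'$-run, being a path in $G$ as well, either is still augmenting in the $E$-run's matching $\mu_{j-1}$ or else $\mu_{j-1}$ already saturates the relevant tasks --- a standard but slightly delicate alternating-path exchange argument. An alternative, cleaner route I would keep in reserve is a direct contradiction with truthfulness-style reasoning: if hiding edges strictly helped $a_i$, then by optimality of $\MB$ on $E'$ the matching $\mu'_m$ has weight at most $w(\mu_m)$ (since $\mu'_m$ is a feasible matching in $G$ too, up to $a_i$'s edges being present), yet $\mu_m$ restricted away from $a_i$ plus the task newly given to $a_i$ in $\mu'_m$ would contradict maximality of $\mu_m$ in $G$ --- provided that re-inserting $a_i$'s new edge keeps feasibility, which it does since $a_i$ has capacity $b_i\ge 1$ and is unsaturated in $\mu_m$. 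I would choose whichever of these two framings leads to the shortest airtight write-up.
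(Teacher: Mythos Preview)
Your proposal has a genuine gap: neither of the two routes you sketch actually reaches a contradiction.

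In the main route, the invariant ``the set of tasks matched by $\mu_j$ contains the set matched by $\mu'_j$'' is true but says nothing about \emph{which agent} receives each task. The sentence ``$\mu_m$ leaves $a_i$ unmatched while $\mu'_m$ matches $a_i$'' is precisely the hypothesis you are trying to refute, not a contradiction: the very same set of tasks can be covered in both runs with $a_i$ used in one allocation and not the other. Task-coverage monotonicity is simply too coarse an invariant here.

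In the alternative route, the step ``take $\mu_m$ and add the edge giving $a_i$ its new task $t$'' does not yield a feasible matching: you check that $a_i$ is unsaturated, but $t$ is (necessarily) already matched in $\mu_m$, since $a_i$ is unsaturated and adjacent to $t$ and $\mu_m$ is optimal on $G$. More fundamentally, optimality alone \emph{cannot} suffice for this lemma. Consider three agents and two tasks of values $1+\epsilon$ and $1$, complete bipartite graph, capacities $1$. Define an optimal mechanism that returns $\{(a_1,t_1),(a_2,t_2)\}$ on the full input but $\{(a_3,t_1),(a_1,t_2)\}$ when $a_3$ hides $(a_3,t_2)$; both outputs are maximum-weight, yet the null-utility agent $a_3$ strictly gains by hiding. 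So the lemma is a property of Algorithm~\ref{alg:Maxvalue} specifically, not of optimal mechanisms in general, and any correct proof must use the algorithm's structure.

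The paper's proof does exactly this, and proves something stronger than task-coverage monotonicity: it shows the two runs are \emph{identical}, i.e.\ $\mu=\mu'$. Take the first step $j$ at which the augmenting paths $P_j$ and $P'_j$ differ; then $\mu_{j-1}=\mu'_{j-1}$. Because the search (BFS or DFS) is deterministic, the exploration on $G$ proceeds exactly as on $G'$ until it first traverses an edge of $G\setminus G'$. Every such edge is incident to $a_k$, and since $a_k$ is unsaturated in $\mu_{j-1}$ (it is unmatched throughout the truthful run), the search terminates there. Hence $P_j$ ends at $a_k$, so $a_k$ becomes matched from step $j$ onward --- contradicting that $a_k$ has null utility in $\mu$. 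The missing idea in your plan is this step-by-step coupling of the two deterministic searches; tracking only which tasks get covered loses the information you need.
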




%
The latter Lemma ensures us that, without loss of generality, if an agent is able to manipulate $\MB$ or $\MD$, then the same agent is assigned at least one task when it reports truthfully.
We now show that the first agent processed by either $\MB$ or $\MD$, hence $a_1$, is always able to get its highest possible payoff by misreporting.

\begin{thm}
\label{thm:max_match_agent_one}
For both $\MB$ and $\MD$, agent $a_1$'s best strategy is to report only the top $b_1$-valued tasks to which it is connected.
\end{thm}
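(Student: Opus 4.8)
The plan is to show that among all strategies available to $a_1$, reporting exactly the set $S^\star$ of the (up to) $b_1$ tasks of highest value among $T_1$ is optimal, by proving that when $a_1$ plays $S^\star$ it is saturated and matched precisely to those $b_1$ top tasks, and that no strategy can do better than matching $a_1$ to its $b_1$ most valuable neighbours. Since $a_1$ has the highest priority, the key observation is that when the algorithm processes a task $t_j \in S^\star$, whatever matching $\mu_{j-1}$ has been built so far, the single edge $(a_1, t_j)$ is available as a candidate augmenting path of length $1$ unless $a_1$ is already saturated — and if $a_1$ is already saturated, it has been matched to tasks processed earlier, i.e. tasks of value at least $q_j$ (because the tasks are processed in decreasing value order). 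I would argue by induction on the processing order that after all of $S^\star$ has been processed, $a_1$ is matched to exactly the tasks in $S^\star$: each $t_j \in S^\star$, when processed, either finds $a_1$ unsaturated (so the length-$1$ augmenting path $\{(a_1,t_j)\}$ is available and will be taken, since BFS/DFS from $t_j$ will reach the unsaturated high-priority vertex $a_1$ immediately), or finds $a_1$ already saturated by $b_1$ tasks all of value $\ge q_j$, all of which lie in $S^\star$ by the ordering.

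The second half is the upper bound: no report of $a_1$ yields utility exceeding $w(S^\star) := \sum_{t_j \in S^\star} q_j$. This is immediate from the matching constraint: whatever $a_1$ reports, it is matched to at most $b_1$ tasks, each of which is a genuine neighbour of $a_1$ (agents are bounded by their statements, so reported edges are real), hence the matched set is a size-$\le b_1$ subset of $T_1$, whose total value is at most that of the $b_1$ largest elements of $T_1$, namely $w(S^\star)$. Combining the two halves, playing $S^\star$ achieves the upper bound and is therefore a best response. I would note that the argument is identical for $\MB$ and $\MD$ since it only uses that (i) tasks are processed in decreasing value order, (ii) $a_1$ has top priority so any search from a task reaching an unsaturated $a_1$ will use the length-$1$ path before exploring longer alternatives through lower-priority agents, and (iii) the $b$-matching constraint caps $a_1$'s bundle size at $b_1$.

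The main obstacle I anticipate is carefully justifying step (ii): that when $t_j \in S^\star$ is processed and $a_1$ is unsaturated, the algorithm really does match $t_j$ to $a_1$ rather than routing an augmenting path elsewhere. For DFS this needs the convention that $a_1$ is explored first from $t_j$ (so $(a_1,t_j)$, with $a_1$ unsaturated, is an augmenting path of length $1$ that terminates the search); for BFS the length-$1$ path is found at the first layer, so it is necessarily the one returned. I would also need to handle the bookkeeping that once $a_1$ is saturated during the processing of $S^\star$, subsequent augmenting paths never pass through $a_1$ in a way that un-matches it from a top task and re-matches it to a worse one — but this cannot happen, because an augmenting path only removes an edge $(a_1, t)$ from the matching when it simultaneously adds a different edge $(a_1, t')$ incident to $a_1$ with $t'$ processed later (hence $q_{t'} \le q_t$), and in fact when $a_1$ plays $S^\star$ every task it is ever offered lies in $S^\star$, so no net loss occurs. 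Once this local analysis at $a_1$ is pinned down, the rest is the trivial counting bound above.
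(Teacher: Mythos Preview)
Your proposal is correct and follows essentially the same approach as the paper: show that when $a_1$ reports only its top $b_1$ edges, each such task is matched to $a_1$ via the length-$1$ augmenting path (since $a_1$ has highest priority and is unsaturated at that step), and observe that this attains the trivial upper bound on $a_1$'s utility. The paper dispatches your final ``obstacle'' more cleanly than your swap-within-$S^\star$ argument: once $a_1$ has reported exactly $|S^\star|\le b_1$ edges and all of them are in the matching, there are \emph{no} non-matching edges incident to $a_1$, so no subsequent augmenting path can touch $a_1$ at all.
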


\begin{proof}
%
Let us denote with $t_{j_1},\dots,t_{j_{b_1}}$ the top $b_1$ tasks agent $a_1$ is connected to.
For every $t_{j_r}$, at the $j_r$-th step of Algorithm \ref{alg:Maxvalue}, agent $a_1$ will not be saturated; therefore the path $P_{j_r}=\{(t_{j_r},a_1)\}$ is augmenting with respect to matching $\mu_{j_r-1}$.
Moreover, since the BFS searches among the vertices in lexicographical order, the path $P_{j_r}$ is always the first one being explored and, since it is augmenting, it will be the one returned. 
To conclude, we notice that, after the $j_r$-th iteration, there are no augmenting paths that pass by $a_1$ as all the edges connected to $a_1$ are already in the matching, thus the set of tasks allocated to $a_1$ will not change in later iterations of the algorithm.
By a similar argument, we infer the same conclusion for $\MD$.
\end{proof}

In the next example, we show that the advantage described in Theorem \ref{thm:max_match_agent_one} is only due to the fact that agent $a_1$ is self-aware of its position in the processing process.
%

\begin{example}
\label{eq:social_welfarealter}
Let us consider the following instance.
There are $3$ agents, namely $\alpha$, $\beta$, and $\gamma$ whose capacities are $b_\alpha=2$ and $b_\beta=b_\gamma=1$. 
Let us consider a set of $4$ tasks, namely $t_j$ with $j\in [4]$ whose respective values are $q_j=2^{-j}$.
In the truthful input, agent $\alpha$ is connected to all $4$ tasks, while agent $\beta$ is connected only to task $t_1$ and agent $\gamma$ is connected only to task $t_2$.
Let us consider the mechanism $\MB$: the maximum matching for the truthful input allocates $t_1$ to agent $\beta$, $t_2$ to agent $\gamma$, and the other two tasks to agent $\alpha$. 
Let us now assume that the processing order of the agents is $\alpha$, $\beta$, and $\gamma$. That is, $a_1=\alpha$, $a_2=\beta$, and $a_3=\gamma$.
Then, if agent $\alpha$ applies the strategy highlighted in Theorem \ref{thm:max_match_agent_one} and reports only the first two edges, it improves its utility from $q_3+q_4$ to $q_1+q_2$.
However, in a different order of agents in which agent $\alpha$ is the second, i.e., $a_2=\alpha$, if it applies the same strategy, then agent $\alpha$ gets only one of the two tasks (depending on who is the agent going first) while if it goes as the third, it receives no tasks.
We also note that, if agent $\alpha$ is the second, its best strategy is to report the edges connecting it to tasks $t_1$ and $t_3$ if agent $\gamma$ goes first or tasks $t_2$ and $t_3$ if $\beta$ goes first.
Hence, the priority of agent $\alpha$ and the priority of the other two agents determines what the best strategy for agent $\alpha$ is.
Similarly, the same conclusion can be drawn for the mechanism $\MD$.
\end{example}

%
%

%
In general, the best strategy of an agent depends on its priority and the reports of the agents whose priority is higher than its.
Indeed, once the agents' order is fixed, it is possible to describe the Nash Equilibria of both $\MB$ and $\MD$.
For every $i=0,1,\dots,n$, let us define the sets $T^{(i)}$ and $B^{(i)}$ in the following iterative way:
 \begin{enumerate*}[label=(\roman*)]
    \item $T^{(0)}=T$ where $T$ is the set of all the tasks and $B^{(0)}=\emptyset$;
    \item $T^{(i)}=T^{(i-1)}\backslash B^{(i-1)}_{i}$, where $B^{(i-1)}_{i}$ is the set containing the top $b_i$-valued tasks among the ones in $T^{(i-1)}$ that agent $i$ is connected to. 
    If agent $i$ is connected to less than $b_i$ tasks, then $B^{(i-1)}_{i}$ contains all the tasks in $T^{(i-1)}$ to which agent $i$ is connected to.
\end{enumerate*}
We then define agent $a_i$'s \emph{First-Come-First-Served} (FCFS) policy as $\SP_i=\{(a_i,t_j)\}_{t_j\in B^{(i-1)}_i}$.
Notice that $\SP_i\in \mathcal{S}_i$ for every $i\in [n]$, so that it is a feasible strategy for every agent.

\begin{thm}
\label{thm:Nash_Equlibrium}
Given an MVbM problem, the \safepolicies\ constitute a Nash Equilibrium that achieves the lowest social welfare for both mechanisms $\MB$ and $\MD$.
Moreover, we have 
\begin{align}
\label{eq:input=output}
\nonumber \MB(\cup_{a_i\in A}\SP_i)&=\MD(\cup_{a_i\in A}\SP_i)\\&=\cup_{a_i\in A}\SP_i.
\end{align}
\end{thm}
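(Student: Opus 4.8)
The plan is to prove the theorem in two stages: first establish the fixed-point identity \eqref{eq:input=output}, and then use it to show that the FCFS profile is a Nash Equilibrium and that it attains the minimum social welfare among all equilibria.

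\textbf{Stage 1: the fixed-point identity.} I would show that when every agent $a_i$ reports exactly $\SP_i=\{(a_i,t_j)\}_{t_j\in B^{(i-1)}_i}$, Algorithm \ref{alg:Maxvalue} (under either BFS or DFS) produces precisely the matching $\cup_{a_i\in A}\SP_i$. The key observation is that in this restricted graph, every task $t_j$ is connected to \emph{at most one} agent: by construction $B^{(i-1)}_i\subseteq T^{(i-1)}$ and $T^{(i)}=T^{(i-1)}\setminus B^{(i-1)}_i$, so the sets $B^{(i-1)}_i$ are pairwise disjoint, and hence each task appears in the report of at most one agent. I would then argue by induction on the steps of the algorithm that $\mu_j = \cup_{a_i\in A}(\SP_i\cap\{(a_i,t_1),\dots,(a_i,t_j)\})$: when task $t_j$ is processed, if some (unique) agent $a_i$ reports the edge $(a_i,t_j)$, then $a_i$ is still unsaturated (since all edges of $\SP_i$ are eventually added and none has been removed, as the sequence stays monotone — no task of $a_i$ is contested), so the length-$1$ path $\{(t_j,a_i)\}$ is augmenting and gets added; if no agent reports $t_j$, nothing happens. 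This gives the identity, and simultaneously shows the run uses only augmenting paths of length $1$, so BFS and DFS coincide.

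\textbf{Stage 2: Nash Equilibrium.} Given Stage 1, each agent $a_i$ receives utility $w_i(\cup_a\SP_a)=\sum_{t_j\in B^{(i-1)}_i}q_j$. I need to show no agent $a_i$ can do better by a unilateral deviation $I'_i\in\mathcal{S}_i$ while the others keep playing $\SP_{-i}$. The crucial claim is that $B^{(i-1)}_i$ is the best set of tasks $a_i$ could ever hope to obtain: tasks in $T\setminus T^{(i-1)}$ are exactly those already claimed by higher-priority agents $a_1,\dots,a_{i-1}$ playing their FCFS policies, and by an argument analogous to Theorem \ref{thm:max_match_agent_one} applied sequentially, each such higher-priority agent saturates on its claimed tasks before $a_i$ is ever explored, so $a_i$ can never be matched to a task in $\bigcup_{k<i}B^{(k-1)}_k$ regardless of what it reports. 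Hence the set of tasks available to $a_i$ is contained in $T^{(i-1)}$, of which it can receive at most $b_i$; the top-$b_i$-valued ones are precisely $B^{(i-1)}_i$, which it already gets. So no deviation helps. I would make the "higher-priority agents saturate first" step rigorous by noting that after Stage 1's structure, when agent $a_i$ deviates, the tasks $t_j\in B^{(k-1)}_k$ for $k<i$ are still processed in the order dictated by their values, and at step $j$ agent $a_k$ (unsaturated at that point, being fed only its own uncontested FCFS tasks among the first $k-1$ agents) grabs $t_j$ via a length-$1$ augmenting path before any path reaching $a_i$ is explored — this relies on the priority ordering and the fact that augmenting from $t_j$ to $a_k$ directly is shorter/earlier than any alternative.

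\textbf{Stage 3: lowest social welfare.} Finally I would show $w(\cup_a\SP_a)\le w(\mu_{NE})$ for every Nash Equilibrium $\mu_{NE}$. Here the plan is to show that in \emph{any} Nash Equilibrium each agent $a_i$ receives utility at least $\sum_{t_j\in B^{(i-1)}_i}q_j$: since playing $\SP_i$ against any profile of the others guarantees $a_i$ at least... — more carefully, one shows by induction on $i$ that in any equilibrium, agent $a_i$'s utility is at least $w_i(\cup_a \SP_a)$, because $a_i$ always has the option to report $\SP_i$ and, given that $a_1,\dots,a_{i-1}$ in equilibrium already receive at least their FCFS values (inductive hypothesis) hence claim at least the value that the FCFS tasks $\bigcup_{k<i}B_k^{(k-1)}$ carry, agent $a_i$ reporting $\SP_i$ can still secure $B^{(i-1)}_i$. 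Summing over $i$ gives $w(\mu_{NE})\ge\sum_i w_i(\cup_a\SP_a)=w(\cup_a\SP_a)$.

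\textbf{Main obstacle.} I expect the delicate point to be Stage 2 (and its mirror in Stage 3): proving that a deviating agent $a_i$ genuinely cannot "steal" a task already assigned to a higher-priority agent under the FCFS profile. Intuitively the priority order protects higher agents, but making this airtight requires carefully tracking the BFS/DFS exploration order at each step $t_j$ and arguing that the direct augmenting path to the higher-priority owner of $t_j$ is found before any longer alternating path that would route $t_j$ to $a_i$ — and separately handling BFS versus DFS, since they explore differently. This is where Theorem \ref{thm:max_match_agent_one}'s proof technique needs to be generalized from $a_1$ to a nested/sequential statement about $a_1,\dots,a_i$ acting together.
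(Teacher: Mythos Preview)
Your three-stage plan mirrors the paper's proof closely (the paper orders the parts as NE, then worst-NE, then the identity, but that is cosmetic). However, there is one structural observation the paper uses that you are missing, and without it both your Stage~2 and Stage~3 remain incomplete.

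The observation is this: when every agent $a_k$ other than the deviator reports $\SP_k$, that agent's \emph{degree} in the reported graph is $|\SP_k|\le b_k$. Consequently $a_k$ can never appear as an \emph{intermediate} vertex on an augmenting path: to be intermediate it would have to be saturated, but then all of its (at most $b_k$) edges are already matched and there is no unmatched edge along which the path could leave $a_k$. This single fact is what guarantees that once a task $t_j\in B^{(k-1)}_k$ is assigned to $a_k$ (which does happen first, by priority, as you argue), it is never subsequently swapped away by any later augmenting path, regardless of what $a_i$ reports and regardless of whether one uses BFS or DFS. Your Stage~2 argument only explains why $a_k$ grabs $t_j$ initially; it does not explain why $t_j$ stays, and your ``main obstacle'' paragraph frames the difficulty as one of exploration order rather than the simpler degree/saturation argument above. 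With this observation in hand, the BFS/DFS distinction you worry about disappears entirely.

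Your Stage~3 has a separate gap. Your inductive hypothesis is that agents $a_1,\dots,a_{i-1}$ receive at least their FCFS \emph{values}; from this you jump to ``agent $a_i$ reporting $\SP_i$ can still secure $B^{(i-1)}_i$''. That does not follow: knowing the higher-priority agents achieve certain total values says nothing about \emph{which} tasks they occupy, so some task in $B^{(i-1)}_i$ might be held by one of them. The paper closes this by arguing inductively that in any equilibrium the higher-priority agents must be receiving precisely their FCFS \emph{tasks} (otherwise the first agent not doing so could profitably deviate to its $\SP$), which is the stronger hypothesis you need.
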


\begin{proof}
We prove the first part of the theorem in two steps.
First, we prove that the \safepolicies\ constitute a Nash Equilibrium.
Second, we show that the Nash Equilibrium they form is the one with the lowest possible social welfare.
Let us then consider agent $a_i$ and, toward a contradiction, let us assume that, when all the other agents report their FCFS policy, reporting the set of edges $S_{a_i}\neq \SP_{i}$ gives $a_i$ a bigger payoff than what it would get from reporting $\SP_i$. 
Let us set $s_i=\min_{(a_i,t_j)\in \SP_{i}} q_j$.
Let us assume that $S_{a_i}$ contains an edge that connects $a_i$ to a task that has a higher value than $s_i$, namely $t_l$.
We now show that, since the other agents are applying their \safepoliciesp, the task $t_l$ is allocated to an agent with higher priority unless the edge already belongs to $\SP_i$. 
Indeed, since $|\SP_j|\le b_j$ for every $a_j\in A$, there cannot be augmenting paths that pass by any of the agents playing their FCFS policy.
Indeed, since the union of the FCFS policies is a $b$-matching, either $(a_i,t_l)\in \SP_i$ or there exists another agent whose FCFS policy connects it to $t_l$.
If $t_l$ is connected to an agent $a_k$, $(a_k,t_l)\in FCFSP_k$, and agent $a_k$'s priority is higher than agent $a_i$'s priority, the final output of the mechanism assigns $t_l$ to $a_k$.
We can then assume that $S_{a_i}$ does not contain edges that connect agent $a_i$ to tasks with values higher than $s_i$ and that do not belong to $\SP_i$.
To conclude, we notice that if $S_{a_i}$ contains an edge connecting agent $a_i$ to a task that has a value lower than $s_i$, then the payoff of agent $a_i$ can only be lowered. 
Indeed, if $|\SP_i|=b_i$, then $a_i$ cannot improve its payoff by reporting edges that connect $a_i$ to tasks that have a value lower than $s_i$.
This follows from the fact that all the other players are using their \safepolicies\ and therefore agent $a_i$ is allocated the set $B^{(i-1)}_i=\{t_j\in T \;\text{s.t.}\; (a_i,t_j)\in \SP_i\}$ if it uses its \safepolicyp.
If $|\SP_i|<b_i$, by definition, it means that there are no tasks that agent $a_i$ can be connected to and that have a value lower than $s_i$.
Therefore $S_{a_i}$ does not contain edges connecting $a_i$ to a task with a value lower than $s_i$ nor edges connecting it with tasks that have a value greater than $s_i$ and that are not included in $\SP_i$.
Since reporting a subset of $\SP_i$ would result in a lower payoff, we deduce that $S_{a_i}=\SP_i$, which is a contradiction since we assumed $\SP_{i}\neq S_{a_i}$.
We now prove that the Nash Equilibrium induced by the \safepolicies\ is one of the worst equilibria.
Toward a contradiction, let us consider another set of strategies, namely $\{S_{a_i}\}_{a_i\in A}$, such that the social welfare achieved by this equilibrium is strictly lower than the one obtained if every agent uses its \safepolicy.
By the definition of social welfare, there must exist at least one agent that, according to the equilibrium defined by $\{S_{a_i}\}_{a_i\in A}$, receives a payoff that is strictly lower than the one it would obtain by using its \safepolicyp.
Let us denote with $a_k$ the first agent that, according to the priority order of the mechanism, receives a lower value.
Agent $a_k$ cannot be the first agent, as it otherwise could apply its \safepolicy\ and get a better payoff.
Then, agent $a_k$ is among the remaining agents and it is not getting any of the tasks that are given to the first agent according to its \safepolicyp, as otherwise, agent $a_1$ could increase its payoff by manipulating and the set of strategies $\{S_i\}_{a_i\in A}$ would not be a Nash Equilibrium.
From a similar argument, we infer that agent $a_k$ cannot be the second agent, as otherwise, it could get a better payoff by using its \safepolicyp.
Moreover, the second agent is allocated the tasks that are granted to it from its \safepolicyp.
Both of which we have already proved cannot be.
By applying the same argument to the other agents, we get a contradiction, since no agent can be agent $a_k$.
We, therefore, conclude that the set of strategies given by the \safepolicies\ is one of the worst Nash Equilibrium.
The last part of the Theorem follows from the fact that $\cup_{a_i\in A}\SP_i$ is itself a $b$-matching, hence it is also an $MVbM$ with respect to the edge set $E=\cup_{a_i\in A}\SP_i$. 
\end{proof}

Following the same argument presented in the proof of Theorem \ref{thm:Nash_Equlibrium}, we infer that any set of strategies $\{S_i\}_{a_i\in A}$ for which it holds $\SP_i\subset S_i$ and $\min_{t_j, \;(a_i,t_j)\in \SP_i} q_j=\min_{t_j, \; (a_i,t_j)\in S_i}q_j$ for every $i$, defines a Nash Equilibrium.
Among this class of Nash Equilibria, $\{\SP_i\}_{a_i\in A}$ is the only one for which the identities in \eqref{eq:input=output} hold. 
%
Moreover, this equilibrium achieves the same social welfare of the matching returned by $\MG$.

\begin{thm}
\label{thm:NE_and_approx}
Given an MVbM problem, let $\mu_E$ be the matching returned by $\MG$.
Then, it holds that $\mu_E=\cup_{a_i\in A}\SP_i$. 
That is, the output of $\MG$ on any given instance is equal to the union of the agents' \safepoliciesp.
Hence, the social welfare achieved by $\MG$ is 
equal to the social welfare achieved by $\MB$ and $\MD$ in one of their worst Nash Equilibria.
\end{thm}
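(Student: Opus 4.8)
The plan is to show, by induction on the processing order of the tasks sorted by decreasing value, that $\MG$ allocates each task exactly the way the \safepolicies\ do. Recall that $\MG$ runs Algorithm \ref{alg:Maxvalue} but only accepts augmenting paths of length $1$; equivalently, it scans the tasks $t_1, t_2, \dots$ in decreasing order of value and, for each $t_j$, assigns it to the highest-priority agent that (i) is still connected to $t_j$ in the reported edge set and (ii) is not yet saturated, and leaves $t_j$ unassigned if no such agent exists. I want to match this process against the iterative definition of $T^{(i)}$ and $B^{(i-1)}_i$ given just before Theorem \ref{thm:Nash_Equlibrium}, whose union over $i$ is $\cup_{a_i\in A}\SP_i$.

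First I would fix the truthful input $E$ and enumerate the tasks $t_1, \dots, t_m$ in decreasing value order (ties broken by whatever fixed rule Algorithm \ref{alg:Maxvalue} uses). The induction is on $j$; the inductive hypothesis is that after $\MG$ has processed $t_1,\dots,t_j$, the partial matching it has built coincides with the restriction of $\cup_i \SP_i$ to those tasks, and moreover that for each agent $a_i$ the set of tasks among $t_1,\dots,t_j$ already assigned to $a_i$ by $\MG$ equals $\{t\in B^{(i-1)}_i : t\in\{t_1,\dots,t_j\}\}$. For the inductive step on $t_{j+1}$: by the greedy rule, $\MG$ gives $t_{j+1}$ to the first agent $a_k$ in priority order that is connected to $t_{j+1}$ and has fewer than $b_k$ tasks so far. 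By the inductive hypothesis, ``fewer than $b_k$ tasks so far'' is exactly the condition that $a_k$ has not yet exhausted its capacity within the $B$-assignment, i.e. that $t_{j+1}\in T^{(k-1)}$ in the iterative construction and $a_k$ is still unsaturated; and the first such agent in priority order is precisely the unique agent whose \safepolicy\ claims $t_{j+1}$ (or none, if every connected agent is already saturated, in which case $t_{j+1}$ is claimed by no $\SP_i$ either). This gives $\mu_E=\cup_{a_i\in A}\SP_i$.

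The last two sentences of the statement are then immediate: since $\mu_E=\cup_{a_i\in A}\SP_i$ we get $w(\mu_E)=w(\cup_{a_i\in A}\SP_i)$, and by Theorem \ref{thm:Nash_Equlibrium} the profile $\{\SP_i\}_{a_i\in A}$ is a worst Nash Equilibrium of both $\MB$ and $\MD$ with $\MB(\cup_i\SP_i)=\MD(\cup_i\SP_i)=\cup_i\SP_i$, so the social welfare of $\MG$ equals that of $\MB$ and $\MD$ at one of their worst equilibria.

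The main obstacle I anticipate is purely a bookkeeping one: making the correspondence between the ``unsaturated highest-priority connected agent'' rule used by the greedy $\MG$ and the indexing in the definition of $B^{(i-1)}_i$ fully precise, in particular checking that an agent $a_k$ reaches saturation in $\MG$'s run at exactly the moment it has collected all $|B^{(k-1)}_k|$ of its FCFS tasks and no earlier or later, and handling the case $|\SP_k|<b_k$ (agent connected to fewer than $b_k$ still-available tasks) so that the two descriptions of ``no eligible agent'' genuinely agree. Everything else is a direct unwinding of the two definitions.
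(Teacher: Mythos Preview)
Your proposal is correct. The only difference from the paper's proof is the choice of induction variable: the paper inducts on the agents, showing first that $\MG$ assigns $a_1$ exactly $\SP_1$, and then that if $a_1,\dots,a_k$ each receive their $\SP$ sets then so does $a_{k+1}$; you instead induct on the tasks in decreasing value order, maintaining the invariant that after step $j$ every agent holds precisely the prefix of its $\SP$ set among $\{t_1,\dots,t_j\}$. Both arguments are really the same equivalence between the task-by-task greedy rule (how $\MG$ actually runs) and the agent-by-agent greedy rule (how the $\SP_i$ are defined), and the bookkeeping you flag --- that an agent $a_l$ with $l<k$ connected to $t_{j+1}$ but with $t_{j+1}\notin B^{(l-1)}_l$ must already hold $b_l$ higher-valued tasks from $B^{(l-1)}_l$ and hence be saturated --- is exactly the content of the paper's inductive step, just read in the other direction. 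Your ordering has the minor advantage of tracking the algorithm's execution directly, while the paper's ordering mirrors the definition of the $\SP_i$; neither buys anything substantive over the other.
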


\begin{proof}
We prove this Theorem by induction.
First, we prove that $\MG$ allocates $a_1$ with the set of tasks $\SP_1$.
Second, we show that if all the agents $a_1,\dots,a_k$ receive $\SP_1,\dots,\SP_k$ from $\MG$, then also agent $a_{k+1}$ receives $\SP_{k+1}$.
Let us consider $a_1$. 
Since $\MG$ checks the agents following their orders, $a_1$ is always the first agent checked.
Hence, a task $t_j$ is not allocated to $a_1$ if and only if there are $b_1$ other tasks that have a higher value than $t_j$ and that are all connected to $a_1$.
This means that the set of tasks allocated to $a_1$ is $\SP_1$.
Let us now assume that agents $a_1,a_2,\dots,a_k$ are given the tasks contained in $\SP_1,\SP_2,\dots,\SP_k$ respectively, and let us consider $a_{k+1}$.
We have that $\MG$ allocates a task $t_j$ to $a_{k+1}$ if, at step $j$ of Algorithm \ref{alg:Maxvalue}, all the agents with priorities higher than $a_{k+1}$ that are connected to $t_j$ are already saturated and $a_{k+1}$ is not saturated.
However, by assumption, agents with a higher priority than $a_{k+1}$ are getting the tasks that they would get from their \safepoliciesp.
We, therefore, conclude that the tasks allocated to agent $a_{k+1}$ from $\MG$ consist of a subset of $T^{(k)}$.
From an argument similar to the one used for agent $a_1$, we infer that $a_{k+1}$ receives the top $b_{k+1}$ higher valued tasks among the ones in $T^{(k)}$, which coincides with the set $\SP_{k+1}$.
We conclude the proof by induction.
\end{proof}

Finally, we show that Theorem \ref{thm:NE_and_approx} along with Theorem \ref{thm:approx_truth} allows us to compute the PoA of both $\MB$ and $\MD$.

\begin{thm}
\label{thm:PoA}
The PoA of $\MB$ and $\MD$ is equal to $2$.
\end{thm}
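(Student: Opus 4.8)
The plan is to establish the upper bound $PoA(\MB), PoA(\MD) \le 2$ and the matching lower bound $PoA(\MB), PoA(\MD) \ge 2$ separately, leveraging the machinery already built. For the upper bound, the key observation is Theorem~\ref{thm:NE_and_approx}: every worst-case Nash Equilibrium of $\MB$ (and $\MD$) achieves social welfare equal to the social welfare of the matching returned by $\MG$ — indeed, Theorem~\ref{thm:Nash_Equlibrium} identifies $\cup_{a_i\in A}\SP_i$ as a worst Nash Equilibrium, and Theorem~\ref{thm:NE_and_approx} shows this equals $\mu_E = \MG(E)$. Combined with Theorem~\ref{thm:approx_truth}, which states $ar(\MG) = 2$, we get that for any instance $I$,
\[
\frac{w(\mu(I))}{w(\mu_{wNE}(I))} = \frac{w(\mu(I))}{w(\MG(I))} \le ar(\MG) = 2,
\]
where $\mu(I)$ is the optimal MVbM. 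Taking the supremum over $I \in \mathcal{I}$ yields $PoA(\MB) \le 2$, and the same for $\MD$.

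For the lower bound, I would exhibit an explicit family of instances on which the ratio approaches $2$. The natural candidate is essentially the instance already used in the proof of Theorem~\ref{thm:tight_bound_general} (or in Example~\ref{eq:social_welfarealter}): take two agents $a_1, a_2$ with $b_1 = b_2 = 1$ and two tasks $t_1, t_2$ with values $q_1 = 1+\epsilon$, $q_2 = 1$, where all four edges are present. The optimal matching has value $2+\epsilon$. One then checks that the FCFS-policy Nash Equilibrium has agent $a_1$ grabbing $t_1$ and agent $a_2$ — connected only, effectively, to $t_1$ in its FCFS set once $t_1$ is taken, or rather receiving $t_2$ — hmm, here one needs a slightly sharper instance where in the worst NE only one task gets allocated. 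Using instead two agents both connected only to $t_1$ (the high-value task) in their FCFS policy: let $a_1$ be connected to $t_1$ only and $a_2$ connected to $t_1$ and $t_2$; then $\SP_1 = \{(a_1,t_1)\}$ and, since $t_1$ is removed, $\SP_2 = \{(a_2,t_2)\}$, giving welfare $2+\epsilon$ — not tight. The genuinely tight instance needs the worst NE to drop a task entirely, so I would use: $n$ agents all truthfully connected to a single high-value task $t_1$ of value $1$ plus, say, agent $a_2$ also connected to a task $t_2$ of value $1$, arranged so that in the FCFS equilibrium $a_1$ takes $t_1$ and the remaining agents (whose FCFS sets become empty) take nothing, while the optimum matches $t_1$ to one agent and $t_2$ to another. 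Concretely: two agents, $b_1=b_2=1$; $t_1$ value $1$, $t_2$ value $1$; edges $(a_1,t_1),(a_2,t_1),(a_2,t_2)$ — wait, that gives $\SP_2=\{(a_2,t_2)\}$ again, welfare $2$. I need both agents' FCFS to land on $t_1$: edges $(a_1,t_1),(a_1,t_2),(a_2,t_1)$ with $q_1 = 1$, $q_2 = 1-\epsilon$; then $\SP_1 = \{(a_1,t_1)\}$, and $a_2$ is only connected to $t_1 \in B^{(0)}_1$, so $T^{(1)} = \{t_2\}$ with $a_2$ not connected to it, hence $a_2$ gets nothing and the worst-NE welfare is $1$, while the optimum is $2-\epsilon$. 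Letting $\epsilon \to 0$ gives ratio $\to 2$.

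With the lower bound instance validated by Theorem~\ref{thm:Nash_Equlibrium} (which guarantees the FCFS policies form a worst NE, and by Theorem~\ref{thm:NE_and_approx} its welfare is exactly that of $\MG$, which one computes directly to be $1$), and the upper bound from $ar(\MG) = 2$, the two bounds coincide and $PoA(\MB) = PoA(\MD) = 2$. The main obstacle, as the false starts above illustrate, is pinning down a lower-bound instance whose worst Nash Equilibrium genuinely leaves value-$(1-\epsilon)$ worth of task unallocated — one must be careful that the self-aware priority advantage of agent $a_1$ (Theorem~\ref{thm:max_match_agent_one}) does not accidentally let a second task get matched in the equilibrium. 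Once the instance is fixed, verifying it is a routine application of the FCFS-policy characterization; the conceptual content is entirely carried by the earlier theorems, so the proof itself should be short.
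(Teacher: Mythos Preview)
Your proposal is correct and follows essentially the same route as the paper: the upper bound via Theorem~\ref{thm:NE_and_approx} together with $ar(\MG)=2$, and the lower bound via the two-agent, two-task instance with edges $\{(a_1,t_1),(a_1,t_2),(a_2,t_1)\}$ (the paper uses values $1+\epsilon$ and $1$ rather than your $1$ and $1-\epsilon$, but the computation is identical). The exploratory false starts should of course be pruned from the final write-up, but the argument you land on matches the paper's exactly.
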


\begin{proof}
From Theorem \ref{thm:NE_and_approx}, the matching returned by $\MG$ achieves a social welfare equal to the social welfare of one of the worst Nash Equilibrium of $\MB$.
Since $\MB$ returns an MVbM, we have
\begin{equation}
    \label{eq:app_socialcompare}
    {\displaystyle PoA(\MB) = \sup_{I\in \mathcal{I}}\frac{w(\mu(I))}{w(\mu_{wNE}(I))}= \sup_{I\in \mathcal{I}} \frac{w(\MB(I))}{w(\MG(I))}.}
\end{equation}
Since the matching found by $\MB$ achieves the maximum social welfare, the last term in equation \eqref{eq:app_socialcompare} is bounded from above by the $a.r.(\MG)$, so that $PoA(\MB) \le ar(\MG)=2$.
To conclude $PoA(\MB)=2$ we show a lower bound of $2$. 
The set of agents is composed of two agents, namely $a_1$ and $a_2$, we assume the agents to be ordered according to the algorithm priority.
The capacity of both agents is equal to $1$. 
The set of tasks contains two tasks, namely $t_1$ and $t_2$, whose values are $1+\epsilon$ and $1$, respectively.
Finally, let us assume that the truthful input is given by $E=\{(a_1,t_1),(a_1,t_2),(a_2,t_1)\}$.
The social welfare is then equal to $2+\epsilon$. 
However, in the worst Nash Equilibrium, the welfare is $1+\epsilon$.
By taking the limit for $\epsilon\to 0$, we conclude that the PoA is equal to $2$.
By a similar argument, we infer $PoA(\MD)=2$.
\end{proof}

The previous bound is tight: there does not exist a deterministic mechanism for the $MVbM$ problem that has a $PoA$ lower than $2$.

\begin{thm}
\label{thm:tight_PoA}
    For every deterministic mechanism $\mathbb{M}$, we have $PoA(\mathbb{M})\ge 2$ with respect to agent manipulations.
\end{thm}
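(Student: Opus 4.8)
The plan is to adapt the lower‑bound construction behind Theorem~\ref{thm:tight_bound_general} to the equilibrium setting: instead of exploiting truthfulness directly, I will exhibit, for an \emph{arbitrary} deterministic mechanism $\mathbb{M}$, a family of instances (parametrised by a small $\epsilon>0$) together with a worst Nash equilibrium whose social welfare is only a $\tfrac{1+\epsilon}{2+\epsilon}$ fraction of the optimum; letting $\epsilon\to 0$ then gives $PoA(\mathbb{M})\ge 2$. Throughout, the public data is fixed: two agents $a_1,a_2$ of capacity $1$ and two tasks $t_1,t_2$ with values $q_1=1+\epsilon$ and $q_2=1$. Note that this cannot simply reuse the instance in Theorem~\ref{thm:PoA}, which relies on knowing the processing order; the instance here must be chosen as a function of $\mathbb{M}$.

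The first step is to analyse $\mathbb{M}$ on the single report profile in which both agents declare only the edge to $t_1$, i.e.\ $(\{t_1\},\{t_1\})$; call its output $\nu$. Consider the instance $I_0$ whose true edge set is $\{(a_1,t_1),(a_2,t_1)\}$, so that neither agent is connected to $t_2$ at all. In $I_0$ each agent has a single admissible strategy, namely reporting $\{t_1\}$; hence $(\{t_1\},\{t_1\})$ is (trivially) the unique Nash equilibrium and it produces $\nu$. Since the optimum on $I_0$ is $q_1=1+\epsilon>0$, if $\nu$ matched no agent to $t_1$ we would already have $PoA(\mathbb{M})=\infty$, so $\nu$ assigns $t_1$ to one of the two agents; after possibly swapping the roles of $a_1$ and $a_2$ we may assume it is $a_1$.

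The second step uses the instance $I_1$ with true edge set $\{(a_1,t_1),(a_1,t_2),(a_2,t_1)\}$, whose unique weight‑$(2+\epsilon)$ matching is $\{(a_1,t_2),(a_2,t_1)\}$ (that $t_2$ can only be matched to $a_1$, forcing $t_1$ onto $a_2$, is an easy check). Here $a_2$ is connected only to $t_1$, so its sole strategy is to report $\{t_1\}$; the induced game is effectively a one‑player game for $a_1$ and therefore admits at least one pure Nash equilibrium. In any such equilibrium $(S_1,\{t_1\})$, agent $a_1$ may deviate to reporting $\{t_1\}$, which reproduces exactly the profile $(\{t_1\},\{t_1\})$ of the first step with the same public data, so $\mathbb{M}$ returns $\nu$ and matches $a_1$ to $t_1$ for a payoff $q_1=1+\epsilon$. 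Being a best response, $a_1$'s equilibrium payoff is at least $1+\epsilon$, and since its capacity is $1$ this forces $a_1$ to be matched to $t_1$; then $a_2$ gets nothing and $t_2$ remains unmatched, so the equilibrium weight is exactly $1+\epsilon$. Hence $w(\mu_{wNE}(I_1))=1+\epsilon$ and $PoA(\mathbb{M})\ge\frac{2+\epsilon}{1+\epsilon}$, and letting $\epsilon\to 0$ concludes.

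The main obstacle is precisely that $\mathbb{M}$ is completely arbitrary: it need not be optimal, monotone, or non‑wasteful, and the induced game need not have any obvious equilibrium, so neither the FCFS‑policy analysis of Theorem~\ref{thm:Nash_Equlibrium} nor the explicit instance of Theorem~\ref{thm:PoA} is available. The device that makes the argument uniform is to route everything through the value of $\mathbb{M}$ on the single profile $(\{t_1\},\{t_1\})$: instance $I_0$ pins this value down (up to relabelling the two agents), and in $I_1$ the agent with a single edge has no strategic freedom, so the other agent can always force the bad outcome by mimicking that profile. The only points requiring care are the (elementary) facts that $\mathbb{M}$'s output on a report profile depends on the reports and the public data only — so it is genuinely the same object in $I_0$ and in $I_1$ — and the uniqueness of the optimal matching of $I_1$.
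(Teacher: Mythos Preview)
Your proof is correct and follows essentially the same approach as the paper: the same two-agent, two-task family of instances, with the profile $(\{t_1\},\{t_1\})$ used to pin down which agent receives $t_1$, and then the extended instance $I_1$ to force a bad equilibrium. Your best-response argument --- that in any Nash equilibrium of $I_1$ agent $a_1$ must earn at least $1+\epsilon$ and hence must hold $t_1$ --- is a slight streamlining of the paper's explicit case split on how $\mathbb{M}$ behaves on the truthful input of $I_1$, but the underlying idea is identical.
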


\begin{proof}
    %
    Toward a contradiction, let $\mathbb{M}$ be a deterministic mechanism  whose $PoA$ is lower than $2$.
    Let us consider the following instance. 
    We have two tasks, namely $t_1$ and $t_2$, whose values are $1+\epsilon$ and $1$, respectively.
    We then have two agents, namely $a_1$ and $a_2$ and both have a capacity equal to $1$.
    Let us now consider the instance in which both the agents are only connected to task $t_1$, hence the truthful input is $E=\{(a_1,t_1),(a_2,t_1)\}$.
    Since $PoA(\mathbb{M})$ is finite, we have that $\mathbb{M}$ allocates $t_1$ to one agent.
    Indeed, if no agent receives a task, no one can improve its own payoff by hiding its only edge (we recall that each agent is bounded by its statements).
    %
    Hence, the truthful instance is already a Nash Equilibrium.
    Furthermore, since the social welfare of this Nash Equilibrium is $0$, this is also one of the worst Nash Equilibria, thus we find a contradiction since we assumed that $\mathbb{M}$ has a finite PoA.
    Let us then assume that one agent gets $t_1$.
    Without loss of generality, let us assume that $\mathbb{M}$ allocates $t_1$ to $a_1$, the other case is completely symmetric with respect to the one we are about to present.
    Let us now consider the instance whose truthful input is $E=\{(a_1,t_1),(a_1,t_2),(a_2,t_1)\}$.
    If $\mathbb{M}$ allocates $t_1$ to $a_2$, we have that a Nash Equilibrium is obtained when agent $a_1$ hides arc $(a_1,t_2)$.
    Indeed, if agent $a_1$ hides $(a_1,t_2)$, the input of the mechanism is $E=\{(a_1,t_1),(a_2,t_1)\}$ which gives the first task to $a_1$. 
    Since $a_2$ is bounded by its statements and its only alternative is to report no edges, it has no better strategy to play.
    Similarly, since $a_1$ is getting its best possible payoff, it has no better strategy to play.
    We then conclude that when $a_1$ hides the edge $(a_1,t_2)$, we have a Nash Equilibrium.
    Finally, we observe that, by taking $\epsilon$ small enough, we get a contradiction with the assumption $PoA(\mathbb{M})<2$, since the maximum social welfare is $2+\epsilon$, while the social welfare returned by the mechanism in the worst Nash Equilibrium is at most $1+\epsilon$.
    Notice that there might be another Nash Equilibrium in which the social welfare is even lower, however, it suffice to notice that the social welfare achieved in the worst Nash Equilibrium is lower than $1+\epsilon$ to conclude the proof.
    Similarly, if the mechanism does not allocate $t_1$ to $a_2$, the instance is already a Nash Equilibrium.
    Indeed, by the same argument used before, $a_2$ cannot improve its own payoff, since it is getting no tasks.
    If $a_1$ is allocated with the task, it cannot improve its payoff  either, since it is getting the maximum payoff it can get.
    Finally, if $a_1$ is not getting $t_1$, it can hide $(a_1,t_2)$ and return to the instance we considered before.
    Again, by taking $\epsilon$ small enough, we retrieve that the $PoA(\mathbb{M})$ cannot be less than $2$.
\end{proof}

We close the section by studying the PoS of $\MB$ and $\MD$.
We recall that the PoA of every mechanism is greater than its PoS, thus we infer that both $\MB$ and $\MD$ have a PoS at most equal to $2$.
Moreover, since the Nash Equilibrium in the example we used in the proof of Theorem \ref{thm:tight_PoA} is unique, the best and worst Nash Equilibria achieve the same social welfare. In particular, this allows us to prove that the PoS of both $\MB$ and $\MD$ is equal to $2$.
Furthermore, this value is tight for the class of deterministic mechanisms.

\begin{thm}
\label{thm:PoS}
    The PoS of $\MB$ and $\MD$ is equal to $2$.
    Moreover, no deterministic mechanism can achieve a lower PoS.
\end{thm}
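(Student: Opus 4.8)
The plan is to establish the statement in two halves: first that $PoS(\MB) = PoS(\MD) = 2$, and second that no deterministic mechanism can do better. For the upper bound $PoS \le 2$, I would simply invoke the inequality $PoS(\mathbb{M}) \le PoA(\mathbb{M})$ noted in the Preliminaries together with Theorem~\ref{thm:PoA}, which gives $PoS(\MB), PoS(\MD) \le 2$. For the matching lower bound, I would reuse the two-agent instance from the proof of Theorem~\ref{thm:tight_PoA} (equivalently, the instance in the lower-bound half of Theorem~\ref{thm:PoA}): two agents $a_1, a_2$ of capacity $1$, two tasks $t_1, t_2$ with values $1+\epsilon$ and $1$, and truthful edge set $E = \{(a_1,t_1),(a_1,t_2),(a_2,t_1)\}$. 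The key observation, which I would state explicitly, is that on this instance there is essentially a \emph{unique} Nash Equilibrium outcome in terms of social welfare: by Theorem~\ref{thm:max_match_agent_one}, agent $a_1$ in any equilibrium secures payoff $q_1 = 1+\epsilon$ by reporting only $\{(a_1,t_1)\}$, which forces $t_2$ to be unmatched (since only $a_1$ is connected to $t_2$), so every equilibrium has social welfare exactly $1+\epsilon$. Hence the best Nash Equilibrium coincides with the worst, and the ratio $\frac{w(\mu(E))}{w(\mu_{bNE}(E))} = \frac{2+\epsilon}{1+\epsilon} \to 2$ as $\epsilon \to 0$, giving $PoS(\MB) \ge 2$, and by the symmetric DFS argument $PoS(\MD) \ge 2$.

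**Next I would handle the tightness claim** for arbitrary deterministic mechanisms. Here I would argue that the very instance constructed in the proof of Theorem~\ref{thm:tight_PoA} already yields a \emph{unique}-welfare Nash Equilibrium for \emph{any} deterministic mechanism $\mathbb{M}$, not just $\MB$ and $\MD$. The proof of Theorem~\ref{thm:tight_PoA} does a case analysis on how $\mathbb{M}$ behaves on the sub-instance $E' = \{(a_1,t_1),(a_2,t_1)\}$ and then on $E = \{(a_1,t_1),(a_1,t_2),(a_2,t_1)\}$, and in every branch it exhibits a Nash Equilibrium whose social welfare is at most $1+\epsilon$. The point I need to add is that in each of these branches, \emph{every} Nash Equilibrium has welfare at most $1+\epsilon$: in the instance where truthful is already an equilibrium with welfare $0$ (no agent gets a task, or $a_2$ is forced out by hiding), one checks there is no equilibrium of welfare $2+\epsilon$ because matching both tasks requires $t_2 \to a_1$, and whenever $\mathbb{M}$ would allocate $t_2$, the relevant agent has a profitable manipulation back to the "bad" sub-instance. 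So the best-case equilibrium welfare is still at most $1+\epsilon$, and $PoS(\mathbb{M}) \ge \frac{2+\epsilon}{1+\epsilon} \to 2$.

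**The main obstacle** I anticipate is the bookkeeping in that last step: Theorem~\ref{thm:tight_PoA} only needs to produce \emph{one} bad Nash Equilibrium to lower-bound the PoA, whereas for PoS I must rule out the existence of a \emph{good} one, so I have to verify across all of $\mathbb{M}$'s possible behaviors that no equilibrium attains the optimal welfare $2+\epsilon$. The cleanest way to organize this is to observe that achieving welfare $2+\epsilon$ forces both tasks matched, hence (by the edge structure) $t_2$ must go to $a_1$ and $t_1$ to $a_2$ in the reported-subgraph matching; then I show that in such a configuration either the reported graph is not the truthful one — in which case some agent deviated downward and I must re-derive the deviation incentives — or, if the truthful graph is reported and $\mathbb{M}$ outputs $t_2\to a_1, t_1\to a_2$, then $a_1$ strictly gains by hiding $(a_1,t_2)$ (getting $t_1$, worth $1+\epsilon > 1$), exactly as in the argument establishing Theorem~\ref{th:no_strategyproof_agent} and reused in Theorem~\ref{thm:tight_PoA}. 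This contradicts the equilibrium property, so no welfare-$(2+\epsilon)$ equilibrium exists, and the PoS lower bound of $2$ follows by letting $\epsilon \to 0$.
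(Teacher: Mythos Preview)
Your proposal is correct and follows essentially the same approach as the paper: bound $PoS$ above by $PoA=2$ via Theorem~\ref{thm:PoA}, and obtain the lower bound (both for $\MB$/$\MD$ and for arbitrary deterministic mechanisms) by reusing the two-agent instance from the proofs of Theorems~\ref{thm:PoA} and~\ref{thm:tight_PoA} and arguing that its Nash Equilibrium is unique, so best and worst equilibria coincide. Your ``main obstacle'' paragraph is in fact more careful than the paper's own proof, which simply asserts uniqueness of the equilibrium without the case analysis you sketch; your observation that any welfare-$(2+\epsilon)$ profile forces $t_2\to a_1$, $t_1\to a_2$ and hence gives $a_1$ a profitable deviation to the sub-instance $E'$ is exactly what is needed to close that gap.
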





\section{The Truthful Inputs for $\MB$ and $\MD$}
\label{subsec:truthinstances}

In this section, we describe three sets of inputs in which $\MB$ and $\MD$ are truthful.
In particular, we consider the three following settings.
In Theorem \ref{thm:truthful_degree}, we consider the case in which there is a shortage of tasks, so that the capacity of each agent exceeds the number of tasks to which it is connected.
In particular, no agent can be saturated.
In Theorem \ref{thm:truth_BFS}, instead, we describe what happens when every task can be contended by another agent.
%
%
%
%
In Theorem \ref{thm:agent_classes}, we study the case in which the private information of the agents can be clustered, that is different agents are connected to the same set of tasks and the same capacity.
Going back to the worker-project example at the beginning of the paper, this means that the connection between the worker and the project depends, for example, on the field of expertise of the worker or their background formation.
%
%

\begin{thm}
\label{thm:truthful_degree}
Let us consider the set of inputs such that, according to $E$, all the agents' degrees are less than or equal to their capacity, that is $\sum_{t_j\in T_i}e_{i,j}\le b_i$ for every $i\in [n]$, then $\MB$ and $\MD$ are truthful on this set of inputs.
\end{thm}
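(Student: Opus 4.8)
The plan is to show that when every agent's degree does not exceed its capacity, Algorithm~\ref{alg:Maxvalue} completes its routine using only augmenting paths of length~$1$, and then invoke Lemma~\ref{thm:truthfulness_cases}. First I would argue that, in this regime, no agent can ever become saturated during the run of the algorithm: since agent $a_i$ is connected to at most $b_i$ tasks, it can receive at most $b_i$ edges in any matching, so it is never saturated with slack to spare. More precisely, at the $j$-th step, when we look for an augmenting path starting from $t_j$, consider the first edge $(t_j,a_i)$ we can traverse. Because $a_i$ has degree at most $b_i$ and at most $b_i - 1$ of its incident edges can already be in $\mu_{j-1}$ (as $t_j$'s edge is not yet in the matching), the vertex $a_i$ is unsaturated with respect to $\mu_{j-1}$; hence the path $P = \{(t_j, a_i)\}$ of length~$1$ is already augmenting. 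Therefore every augmenting path found by the algorithm has length~$1$, and the matching sequence $\{\mu_k\}_k$ is monotone increasing.

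Next I would verify that this property is robust under manipulation, which is exactly what the hypothesis of Lemma~\ref{thm:truthfulness_cases} requires: if an agent hides some of its edges, its reported degree only decreases, and the degrees of the other agents are unchanged, so the condition $\sum_{t_j \in T_i} e_{i,j} \le b_i$ still holds for the manipulated instance. Consequently, no agent can misreport in such a way that the algorithm finds an augmenting path of length greater than~$1$: by the argument above, the first edge out of $t_j$ always leads to an unsaturated agent regardless of which subset of edges each agent reports. Thus both hypotheses of Lemma~\ref{thm:truthfulness_cases} are met for every agent, and we conclude that reporting truthfully is each agent's best strategy under both $\MB$ and $\MD$.

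The main obstacle, such as it is, is making the "never saturated" claim precise as an invariant maintained across all iterations of the algorithm rather than a one-step observation: one must check that the symmetric-difference update $\mu_j = \mu_{j-1} \oplus P_j$ with $|P_j| = 1$ keeps the number of edges incident to any agent $a_i$ bounded by $\deg(a_i) \le b_i$, which is immediate since $\mu_j \subseteq E$ and hence $a_i$ has at most $\deg_E(a_i) \le b_i$ incident edges in $\mu_j$. With that invariant in hand, the rest is a direct application of Lemma~\ref{thm:truthfulness_cases}, and the same reasoning applies verbatim to $\MD$ since the DFS also traverses the first available edge from $t_j$ to an unsaturated agent.
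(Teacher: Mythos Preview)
Your proposal is correct and follows essentially the same approach as the paper: observe that the degree condition prevents any agent from being saturated at the moment it is reached from an unmatched task, so every augmenting path has length~$1$; note that hiding edges only lowers degrees and hence preserves this property; and then invoke Lemma~\ref{thm:truthfulness_cases}. Your write-up is simply a more detailed version of the paper's two-sentence argument, and in particular your observation that $(a_i,t_j)\notin\mu_{j-1}$ forces $a_i$ to have at most $b_i-1$ matched edges at step~$j$ makes precise the paper's terse claim that ``no agents can be saturated.''
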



\begin{proof}
%
Since no agents can be saturated, any augmenting path found by the mechanisms has length equal to $1$.
Moreover, since hiding edges cannot lead an agent to be saturated, Lemma \ref{thm:truthfulness_cases} allows us to conclude the proof.
\end{proof}

This is the only class of inputs we are considering on which $\MB$ and $\MD$ behave in the same way.
In the other two frameworks, $\MD$ is not truthful, while its counterpart $\MB$ is.
This is due to the fact that BFS searches for the shortest possible augmenting path in its execution.

\begin{thm}
\label{thm:truth_BFS}
Let $\mu$ be the matching returned by $\MB$.
If for every task $t_j$ there exists an edge $e\notin \mu$ that connects $t_j$ to an unsaturated agent, then no agents can increase their utility by hiding only one edge. 
In particular, if the truthful input is a complete bipartite graph and the vector of the capacities $\bb$ is such that $m\le \sum_{i=1}^{n} b_i-\max_{i\in [n]}{b_i}$, then the best strategy for every agent is to report truthfully.
\end{thm}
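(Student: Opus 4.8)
The plan is to prove the two claims of Theorem~\ref{thm:truth_BFS} separately, with the first (no agent gains by hiding a \emph{single} edge) doing most of the work and the second (complete bipartite graph with the stated capacity bound) following as a corollary.

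\textbf{Single-edge deviation.} Fix an agent $a_i$ and suppose it hides one edge $(a_i,t_l)$ from its truthful report, obtaining an input $E'$. Let $\mu$ be the matching $\MB$ returns on $E$ and $\mu'$ the matching it returns on $E'$. If $(a_i,t_l)\notin\mu$, then by the definition of the routine of Algorithm~\ref{alg:Maxvalue} with BFS, removing an unmatched edge cannot change the sequence of matchings $\{\mu_j\}_j$ (the shortest augmenting path found at each step does not use $(a_i,t_l)$, and no new shorter path is created by deleting an edge), so $\mu'=\mu$ and $a_i$ gains nothing. The substantive case is $(a_i,t_l)\in\mu$. Here I would argue that, because of the hypothesis, at the step $j$ of the algorithm at which $t_l$ is processed on input $E$, the BFS found the length-$1$ augmenting path $\{(t_l,a_i)\}$ only because $a_i$ was the highest-priority \emph{unsaturated} agent connected to $t_l$; by hypothesis there is another edge $e=(t_l,a_k)\notin\mu$ with $a_k$ unsaturated in $\mu$. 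One shows that on input $E'$, at step $j$ the algorithm finds an augmenting path ending at $a_k$ (of length $1$ if $a_k$ was unsaturated at step $j{-}1$, which I would establish by comparing the two runs up to step $j{-}1$), so $t_l$ is still matched — just to a different agent — and the total set of tasks matched to $a_i$ after step $j$ is a subset of what it had before. The key point to nail down is that the runs of the algorithm on $E$ and $E'$ agree on which tasks get matched through the first $j{-}1$ steps, so that the "spare" unsaturated neighbour $a_k$ of $t_l$ is still available at step $j$; BFS's shortest-path property is what keeps the perturbation local. From there, since later steps only add edges incident to later-indexed (lower-valued) tasks and $a_i$ has already "lost" $t_l$, its utility cannot exceed $w_i(\mu)-q_l<w_i(\mu)$.

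\textbf{Complete bipartite case.} Suppose $E$ is complete bipartite and $m\le\sum_{i=1}^n b_i-\max_i b_i$. I would first observe that $\MB$ matches every task (the total capacity suffices and every agent is connected to every task), and that in $\mu$ there is always an unsaturated agent: since $m\le\sum_i b_i-\max_i b_i$, even the agent with the largest capacity cannot be saturated — more carefully, at most $m$ edges are in $\mu$, so the saturated agents use up at most $m<\sum_i b_i$ units of capacity, leaving some agent unsaturated; and since the graph is complete, that unsaturated agent is connected to every task, giving for each $t_j$ an edge $e\notin\mu$ to an unsaturated agent. This verifies the hypothesis of the first part. To upgrade "no gain from hiding one edge" to "truthful" (no gain from hiding any subset), I would iterate: hiding a set $S$ of edges can be simulated one edge at a time, and at each stage the complete-bipartite-plus-capacity hypothesis is preserved for the remaining graph (deleting edges of one agent keeps the others complete and the capacity slack only grows relative to the possibly-reduced matching). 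Invoking the single-edge argument at each stage — or, more cleanly, invoking Lemma~\ref{thm:truthfulness_cases} once we observe that in the complete-bipartite regime BFS only ever uses length-$1$ augmenting paths and no misreport by $a_i$ can force a longer one — yields that $a_i$'s best response is truthful.

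\textbf{Main obstacle.} The delicate step is the coupling argument in the single-edge case: showing that the BFS runs on $E$ and on $E\setminus\{(a_i,t_l)\}$ stay "synchronised" up to the step where $t_l$ is handled, so that the alternative unsaturated neighbour guaranteed by the hypothesis is genuinely available to absorb $t_l$. One must be careful that hiding $(a_i,t_l)$ does not change which agents are saturated at earlier steps in a way that cascades; the fact that BFS picks a shortest augmenting path (rather than DFS, which may pick a long one that happens to route through $a_i$) is exactly what prevents such a cascade, and this is where the asymmetry with $\MD$ enters. I would isolate this as the core lemma and keep the rest bookkeeping.
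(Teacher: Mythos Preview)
Your direct coupling argument for the single-edge case contains a concrete error: the bound $w_i(\mu')\le w_i(\mu)-q_l$ is false. After $a_i$ loses $t_l$ at step~$l$ it has a spare slot, and at some later step $l'>l$ it may pick up a task $t_{l'}$ that it did \emph{not} receive in $\mu$; so the most you can hope for is $w_i(\mu')\le w_i(\mu)$, and even that requires the post-step-$l$ synchronisation you flag as the main obstacle but never actually carry out. Your sentence ``later steps only add edges incident to later-indexed tasks and $a_i$ has already lost $t_l$'' does not bound $a_i$'s final utility, because $a_i$'s allocation at later steps can differ between the two runs.

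The paper's proof bypasses the step-by-step coupling altogether. It observes that the hypothesis (every task has, via an edge outside $\mu$, a neighbour unsaturated in $\mu$) means that at \emph{every} step of the BFS run a length-$1$ augmenting path is available---the spare agent, being unsaturated in the final matching, is unsaturated at every intermediate stage---so BFS, which always returns a shortest augmenting path, never uses a path of length greater than~$1$. The same spare agents survive the removal of a single edge of $a_i$ (the removed edge lies in $\mu$, the spare edges do not), so the manipulated run also uses only length-$1$ paths. Lemma~\ref{thm:truthfulness_cases} then gives the conclusion in one line. This is exactly the ``cleaner'' alternative you mention for the complete-bipartite part; the paper uses it for \emph{both} parts, and it eliminates all the bookkeeping you were proposing.

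For the second part, your check that the capacity condition $m\le\sum_i b_i-\max_i b_i$ forces some agent other than the manipulator to stay unsaturated throughout (and hence, by completeness, supplies a length-$1$ path for every task no matter what $a_i$ hides) is correct and is what makes Lemma~\ref{thm:truthfulness_cases} applicable to arbitrary---not just single-edge---deviations; there is no need to iterate the single-edge argument.
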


\begin{proof}
Assume, toward a contradiction, that an agent, namely $a_i$ gets a benefit by hiding an edge, namely $e=(a_i,t_j)$.
By hypothesis, we have that every task is connected to an unsaturated agent.
Let $a_k$ be one of the unsaturated agents to which $t_j$ is connected to.
Then BFS will always find an augmenting path whose length is $1$ when it is asked to allocate $t_j$, since there exists the augmenting path $(a_k,t_j)$.
Therefore, by Lemma \ref{thm:truthfulness_cases} we infer a contradiction.
\end{proof}



%
Let $\AA=\{A^{(1)},\dots,A^{(r)}\}$ be a partition of $A$.
We say that $A^{(\ell)}$ is the $\ell$-th class of the agents.
Since $\AA$ is a partition, every agent $a_i\in A$ belongs to only one class.
Let us assume that the capacity $b_i$ and the set of edges $T_i$ of every agent $a_i\in A$ depends only on the class $A^{(\ell)}$ to which $a_i$ belongs, so that $b_i=b^{(\ell)}$ and $T_i=T^{(\ell)}$.
Using an argument that is similar (but more delicate) to the one used in Theorem \ref{thm:truth_BFS}, we are able to prove that $\MB$ is truthful if every class contains enough agents.

\begin{thm}
\label{thm:agent_classes}
In the framework described above, if $|A^{(\ell)}|>\big\lceil \frac{|T^{(\ell)}|}{b^{(\ell)}} \big\rceil+1$, then no agents belonging to the $\ell$-th class can benefit by misreporting to $\MB$.
\end{thm}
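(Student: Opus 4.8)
The plan is to adapt the proof of Theorem~\ref{thm:truth_BFS} to the clustered setting, the key point being that an agent $a_i$ in class $A^{(\ell)}$ can only profit from misreporting if its deviation forces $\MB$ to build an augmenting path of length greater than $1$ passing through $a_i$ (this is the contrapositive of Lemma~\ref{thm:truthfulness_cases}). So I would fix an instance satisfying $|A^{(\ell)}|>\lceil |T^{(\ell)}|/b^{(\ell)}\rceil+1$, suppose for contradiction that some $a_i\in A^{(\ell)}$ benefits by hiding a subset of its edges, and track the execution of Algorithm~\ref{alg:Maxvalue} endowed with BFS on the manipulated input. I would show that at every step $j$ at which a task $t_j\in T^{(\ell)}$ is processed, there is always an agent of class $A^{(\ell)}$ that is unsaturated and still connected (via its truthful edges, which it reports truthfully since only $a_i$ deviates) to $t_j$, so BFS finds an augmenting path of length $1$ and never needs to route through $a_i$; combined with Theorem~\ref{thm:max_match_agent_one}-style reasoning that $a_i$'s own allocation then cannot exceed what truthful reporting gives, this yields the contradiction.

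The arithmetic core of the argument is the counting bound. The agents of class $A^{(\ell)}$, when reporting truthfully, collectively want the $|T^{(\ell)}|$ tasks of $T^{(\ell)}$, and each can absorb $b^{(\ell)}$ of them; hence at most $\lceil |T^{(\ell)}|/b^{(\ell)}\rceil$ agents of the class can ever be saturated by tasks of $T^{(\ell)}$ during the run. Even granting that the deviating agent $a_i$ and one extra agent (the ``$+1$'') might be saturated or behave pathologically, the strict inequality $|A^{(\ell)}|>\lceil |T^{(\ell)}|/b^{(\ell)}\rceil+1$ leaves at least one agent $a_k\in A^{(\ell)}\setminus\{a_i\}$ that is unsaturated at every step processing a task in $T^{(\ell)}$. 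Since $a_k$ reports truthfully and $T_k=T^{(\ell)}$, the edge $(a_k,t_j)$ is available, so the path $\{(t_j,a_k)\}$ is augmenting of length $1$ and is the first one BFS explores (BFS always returns a shortest augmenting path). Therefore the matching sequence stays monotone increasing as far as $a_i$ is concerned, and Lemma~\ref{thm:truthfulness_cases} applies.

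The step I expect to be the main obstacle is making precise the claim that ``at most $\lceil |T^{(\ell)}|/b^{(\ell)}\rceil$ agents of class $A^{(\ell)}$ get saturated,'' because saturation is a property of the \emph{final} matching whereas BFS makes irrevocable-looking but actually revisable choices along the way: an agent can be saturated at an intermediate step and then have edges rerouted out of the matching by a later augmenting path. The delicate point flagged by the authors (``similar but more delicate'') is precisely that one must argue about the availability of an unsaturated class-$\ell$ agent \emph{at the moment task $t_j$ is processed}, not at the end. I would handle this by an inductive invariant: before processing task $t_j$, the number of class-$\ell$ agents saturated \emph{and} having all $b^{(\ell)}$ of their matched edges incident to already-processed tasks of $T^{(\ell)}$ is at most $\lfloor (\text{number of processed tasks of }T^{(\ell)})/b^{(\ell)}\rfloor$, which never exceeds $\lceil|T^{(\ell)}|/b^{(\ell)}\rceil$; any class-$\ell$ agent not of this form either is unsaturated (done) or is saturated only using edges to tasks outside $T^{(\ell)}$, but $T_k=T^{(\ell)}$ forbids that. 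Allowing one unit of slack for $a_i$'s manipulation and one more for boundary effects of the ceiling gives the ``$+1$'', and the strict inequality then guarantees a witness $a_k$, completing the contradiction.
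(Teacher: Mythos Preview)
Your approach is essentially the paper's: both arguments reduce to the counting observation that class-$\ell$ agents can jointly absorb at most $|T^{(\ell)}|$ tasks, so at most $\lceil |T^{(\ell)}|/b^{(\ell)}\rceil$ of them can ever be saturated, leaving (after excising the deviator) an unsaturated truthful class-$\ell$ agent adjacent to every $t_j\in T^{(\ell)}$; BFS therefore always finds a length-$1$ augmenting path, and Lemma~\ref{thm:truthfulness_cases} finishes.

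The obstacle you flag is not actually there. When Algorithm~\ref{alg:Maxvalue} augments along a path, every internal agent on the path loses one matched edge and gains one, while the terminal agent gains one; hence an agent's matched degree is monotone nondecreasing across iterations and saturation, once attained, is never lost. So the bound ``at most $\lceil |T^{(\ell)}|/b^{(\ell)}\rceil$ class-$\ell$ agents are saturated'' holds at every intermediate step, not just at the end, and your inductive invariant is unnecessary. The paper packages this more cleanly via an auxiliary lemma: among agents reporting identical edge sets, if some $a_l$ receives any task then every higher-priority such agent is already saturated (otherwise BFS would have explored the higher-priority copy of the same edge first). This priority-filling lemma immediately pins down \emph{which} class-$\ell$ agents can be saturated and makes the existence of an unsaturated witness transparent.
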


In the appendix, we report two examples that show that both Theorem \ref{thm:truth_BFS} and Theorem \ref{thm:agent_classes} do not hold for $\MD$.
%


\section{Agents Manipulating their Edges and Capacity}
\label{subsect:ECMS}

Finally, we extend our study on the truthfulness of $\MB$, $\MD$, and $\MG$ to the ECMS, i.e. we allow the agents self-report their capacity along with their edges.
As for the EMS, we assume the agents to be bounded by their statements, thus they can manipulate only by hiding edges or by reporting a lower capacity than their real one.
In this setting, a mechanism $\mathbb{M}$ is truthful if, for every $i\in [n]$, it holds $w_i((I'_i,b_i'),J_{-i})\le w_i((I_i,b_i),J_{-i})$, for every $(I_i',b'_i)\in\mathcal{S}_i\times [b_i]$, where $J_{-i}$ are the reports of the other agents.
Once we fix the set of strategies of each agent, we can define the PoA, PoS, and approximation ratio of a mechanism $\mathbb{M}$ as for the EMS by carefully changing the set of strategies to fit the ECMS case.
With a slight abuse of notation, we still use $\MB$, $\MD$, and $\MG$ to denote the mechanisms obtained from Algorithm \ref{alg:Maxvalue} and its approximation version.
%
As we show, neither the truthfulness nor the efficiency guarantees of $\MB$, $\MD$, and $\MG$ change from EMS to ECMS. 
Furthermore, all the bounds are still tight.
%
%

%

\begin{thm}
\label{thm:PoA_PoS_edge_cap}
    In the ECMS, $\MB$ and $\MD$ are both not truthful.
    The PoA and the PoS of both $\MB$ and $\MD$ are equal to $2$.
    Moreover, these bounds are tight, hence there is no other deterministic mechanism whose PoA or PoS is lower.
\end{thm}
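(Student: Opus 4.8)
The plan is to reduce the ECMS claims to the EMS results already established, exploiting the fact that the EMS is a special case of the ECMS where every agent's reported capacity coincides with its true one. First I would establish that $\MB$ and $\MD$ are not truthful in the ECMS: this is immediate, because any EMS instance on which an agent can profitably hide an edge (for instance the instance in the proof of Theorem \ref{th:no_strategyproof_agent}) is also an ECMS instance in which the agent can deviate by hiding the same edge while truthfully reporting its capacity. Since truthfulness in the ECMS requires immunity to a strictly larger set of deviations, non-truthfulness transfers directly. The same observation shows that the efficiency guarantees cannot improve: the ECMS strategy space contains the EMS strategy space, so every EMS Nash Equilibrium survives as an ECMS Nash Equilibrium once we check that no agent has a profitable capacity-reduction deviation from it. In particular, the worst-case instance from the proof of Theorem \ref{thm:tight_PoA} — two agents of capacity $1$ and tasks of value $1+\epsilon$ and $1$ — already has agents with capacity $1$, so reducing capacity is not even an available move; hence that instance's Nash Equilibrium, with social welfare at most $1+\epsilon$ against an optimum of $2+\epsilon$, is an ECMS Nash Equilibrium, giving $PoA \ge 2$ and $PoS \ge 2$ for every deterministic ECMS mechanism, exactly as in Theorems \ref{thm:tight_PoA} and \ref{thm:PoS}.

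For the matching upper bounds $PoA(\MB) = PoA(\MD) = 2$ and $PoS(\MB)=PoS(\MD)=2$, I would mirror the argument of Theorem \ref{thm:PoA}: the key is that, given any ECMS input (a profile of reported edges and reported capacities), $\MB$ and $\MD$ still run Algorithm \ref{alg:Maxvalue} on that reported graph and reported capacity vector, so they still return an MVbM \emph{with respect to the reported data}. Thus for any ECMS Nash Equilibrium, the welfare at equilibrium equals the optimum of the reported instance, and one compares this to the optimum of the true instance. The cleanest route is to re-run the equilibrium characterization: I would show that the FCFS-type policies, now with each agent reporting its \emph{true} capacity (reducing capacity never helps an agent, since in Algorithm \ref{alg:Maxvalue} a smaller $b_i$ can only reduce the number of edges $a_i$ ends up saturated with, by the monotonicity already exploited in Lemma \ref{thm:truthfulness_cases} and Theorem \ref{thm:max_match_agent_one}), still form a worst Nash Equilibrium, and that $\MG$'s output still coincides with the union of these policies (Theorem \ref{thm:NE_and_approx} goes through verbatim once capacities are fixed to their true values at this equilibrium). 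Then $PoA(\MB) \le ar(\MG) = 2$ by the same chain of inequalities as in \eqref{eq:app_socialcompare}, and $PoS \le PoA \le 2$; combined with the lower bound from the previous paragraph this forces equality.

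The main obstacle is the one subtlety absent from the EMS analysis: I must verify that allowing an agent to shrink its capacity does not open up a new profitable manipulation that would break the FCFS equilibrium, and does not enlarge the set of Nash Equilibria in a way that lowers the worst-case welfare below $1+\epsilon$ on the tight instance. For the former I would argue, as sketched above, that in Algorithm \ref{alg:Maxvalue} with the other agents' reports fixed, an agent reporting $b_i' < b_i$ together with any edge set $S_i$ obtains a matching that is dominated (task-for-task, in value) by what it obtains reporting $b_i$ and $S_i$ — because the run of the algorithm with the larger capacity saturates $a_i$ with a superset of the value it gets under the smaller capacity, the extra saturations only adding value. Hence a capacity reduction is weakly dominated by an edge-only strategy, which reduces every ECMS deviation to an EMS deviation already handled. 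For the tight instance, since both agents have true capacity $1$, capacity manipulation is vacuous, so the equilibrium structure and the social welfare bound are literally unchanged. This dominance lemma is the only genuinely new ingredient; the rest is bookkeeping that transports the Sections \ref{sect:strategy} arguments to the larger strategy space.
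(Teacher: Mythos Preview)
Your overall plan matches the paper's proof: reduce non-truthfulness to the EMS case via strategy-space containment, show that the profile $\{(\SP_i,b_i)\}_{i\in[n]}$ is a worst Nash Equilibrium in the ECMS, and reuse the capacity-$1$ instances from Theorems \ref{thm:tight_PoA} and \ref{thm:PoS} for the lower bounds (where capacity manipulation is vacuous). That skeleton is exactly what the paper does.

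The gap is in your ``dominance lemma''. You claim that for $\MB$ and $\MD$, with the other agents' reports fixed arbitrarily, an agent reporting $(S_i,b_i')$ with $b_i'<b_i$ obtains a set of tasks that is dominated in value by what it obtains under $(S_i,b_i)$. This is false. Take $a_1$ with capacity $2$, $a_2$ with capacity $1$, tasks $t_1,t_2,t_3$ of values $10,5,1$, and reported edges $a_1\!-\!t_1,t_2,t_3$ and $a_2\!-\!t_1,t_2$. Running $\MB$ with $b_1=2$, the augmenting path at step $3$ is $(t_3,a_1),(a_1,t_1),(t_1,a_2)$, leaving $a_1$ with $\{t_2,t_3\}$ of value $6$. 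If $a_1$ instead reports capacity $1$, the run gives $\mu=\{(a_1,t_1),(a_2,t_2)\}$ and $a_1$ keeps $t_1$ of value $10$. So shrinking capacity strictly helps here, and the monotonicity you cite from Lemma \ref{thm:truthfulness_cases} and Theorem \ref{thm:max_match_agent_one} does not deliver the general statement.

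Fortunately you do not need the lemma in that generality; you only need it at the FCFS profile, and there the paper's direct argument works and is what you should use. At $\{(\SP_j,b_j)\}_{j\neq k}$, every task in $\SP_j$ with $j<k$ is connected (in the reported graph) to the unsaturated higher-priority agent $a_j$, so the length-$1$ path $(t,a_j)$ is found first regardless of what $a_k$ reports; hence $a_k$ can only be assigned tasks in $T\setminus\bigcup_{j<k}\SP_j$, and among those its payoff is maximised by reporting $(\SP_k,b_k)$. This profile-specific argument replaces your dominance lemma and is precisely how the paper establishes that $\{(\SP_i,b_i)\}$ is a (worst) Nash Equilibrium in the ECMS; the rest of your sketch then goes through unchanged.
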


Similarly, $\MG$ is still truthful in the ECMS, and its approximation ratio is unchanged. 

\begin{thm}
\label{thm:last_mg_truth_cap}
In the ECMS, $\MG$ is truthful, its approximation ratio is equal to $2$.
Moreover, there is no deterministic truthful mechanism with a better approximation ratio.
Finally, if the tasks have different values, $\MG$ is group strategyproof.
\end{thm}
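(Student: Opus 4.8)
The plan is to reduce the ECMS claims to the EMS results established earlier plus a short argument about why lowering one's reported capacity never helps. For the truthfulness of $\MG$, I would reason as follows. $\MG$ uses only augmenting paths of length $1$, so each task $t_j$ is allocated independently: at step $j$, task $t_j$ goes to the highest-priority agent that is both connected to $t_j$ (according to its report) and not yet saturated (according to its reported capacity). Fix agent $a_i$ with true type $(I_i,b_i)$ and any report $(I_i',b_i')$ with $I_i'\subseteq T_i$ and $b_i'\le b_i$. I claim the set of tasks $a_i$ receives under $(I_i',b_i')$ is, value-for-value, dominated by what it receives under $(I_i,b_i)$. First, note that at every step $j$, whether a higher-priority agent is saturated does not depend on $a_i$'s report at all (their allocations are computed independently of $a_i$), so the set of tasks "still available to $a_i$" at each step is the same in both runs. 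Among those available tasks, reporting the full edge set $I_i$ and the true capacity $b_i$ lets $\MG$ assign to $a_i$ a greedy top-valued selection, whereas $I_i'\subseteq I_i$ restricts the candidate tasks and $b_i'\le b_i$ caps how many it can take earlier, so every task $a_i$ obtains under the misreport is available and would be taken under the truthful report unless a strictly higher-valued available task preempts the slot — in which case $a_i$ is no worse off. This yields $w_i\big(\MG((I_i',b_i'),J_{-i})\big)\le w_i\big(\MG((I_i,b_i),J_{-i})\big)$, which is truthfulness. The cleanest way to package this is to invoke Lemma \ref{thm:truthfulness_cases} in spirit: $\MG$ always runs with length-$1$ paths regardless of the input, and reporting a lower capacity cannot create longer augmenting paths, so the monotonicity argument of that lemma carries over verbatim once we note that shrinking $b_i$ only removes slots for $a_i$.

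For the approximation ratio of $\MG$ in ECMS: the upper bound $ar(\MG)\le 2$ is immediate because $\MG$ in ECMS, when all agents report truthfully (which they do, by the truthfulness just proved, when we evaluate $ar$), is exactly $\MG$ in EMS on the true graph and true capacities, and Theorem \ref{thm:approx_truth} already gives the factor-$2$ guarantee there; the definition of $ar$ compares against the optimum on the true instance. The matching lower bound $ar(\mathbb{M})\ge 2$ for every deterministic truthful mechanism follows from Theorem \ref{thm:tight_bound_general}: the two-agent, two-task instance used there has both agents with capacity $1$, and since capacity $1$ cannot be lowered (capacities are positive integers), the ECMS strategy space of each agent in that instance coincides with the EMS strategy space, so the same argument produces the same contradiction. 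I would state this observation explicitly — that on instances where all true capacities equal $1$, ECMS and EMS are identical — since it is the hinge for transferring the lower bound.

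For group strategyproofness when all task values are distinct: I would mirror the proof of Theorem \ref{thm:strategyproof}. Suppose a coalition $C$ colludes; without loss of generality some member $a_{i_1}$ either hides an edge of the $\MG$-matching or lowers its capacity in a way that drops a matched task. In $\MG$ each task is allocated independently and only to connected, unsaturated agents, so after the deviation $a_{i_1}$ cannot be assigned that task, and — because the per-task allocations of higher-priority agents are unaffected by $a_{i_1}$'s report and lower-priority agents' reports cannot hand $a_{i_1}$ anything it wasn't already eligible for — $a_{i_1}$ cannot gain a different task of equal or greater value; with all values distinct, it is strictly worse off, contradicting the coalition-improvement condition. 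The one extra wrinkle relative to Theorem \ref{thm:strategyproof} is handling capacity reductions: lowering $b_{i_1}$ only shrinks the set of slots, so the same "can only lose" conclusion holds; I would spell that half-sentence out and otherwise cite the earlier proof.

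The main obstacle is the first part — truthfulness of $\MG$ in ECMS — specifically making rigorous the claim that an agent's reported capacity does not perturb the saturation status of higher-priority agents and hence does not change which tasks remain available to it. This requires being careful about the order of processing in Algorithm \ref{alg:Maxvalue} (tasks in decreasing value order) versus the priority order of agents, and confirming that the independence-across-tasks property of the length-$1$ approximation genuinely decouples $a_i$'s capacity report from everything computed for agents $a_1,\dots,a_{i-1}$. Once that decoupling is nailed down, the rest is a direct appeal to Lemma \ref{thm:truthfulness_cases}, Theorem \ref{thm:approx_truth}, Theorem \ref{thm:tight_bound_general}, and Theorem \ref{thm:strategyproof}, adapted only by the remark that a strict capacity decrease behaves exactly like a partial edge deletion for the purpose of all these monotonicity arguments.
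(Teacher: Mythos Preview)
Your proposal is correct and follows essentially the same route as the paper. The only packaging difference is that the paper decomposes the truthfulness argument into two steps --- first showing that for \emph{any} fixed edge report, lowering capacity from $b_i$ to $b_i'$ yields a subset of the tasks (hence no gain), and then invoking the EMS edge-only result --- whereas you handle the combined deviation $(I_i',b_i')$ in one pass via the decoupling observation that higher-priority agents' allocations are independent of $a_i$'s report; both arguments rest on the same length-$1$ structure of $\MG$, and your identified ``main obstacle'' (the decoupling) is exactly the implicit hinge in the paper's containment claim $\mu_i'\subset\mu_i$. For the approximation ratio, its tightness, and group strategyproofness, your reductions to Theorems~\ref{thm:approx_truth}, \ref{thm:tight_bound_general}, and \ref{thm:strategyproof} via the capacity-$1$ observation match the paper's proof verbatim.
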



\section{Conclusion and Future Work}
\label{sect:conclusion}

In this paper, we propose a new game-theoretical framework for MVbM 
problems, where one side of the bipartite graph consists of agents and the other side consists of tasks with objective values.
We consider scenarios where agents can behave strategically by hiding connections with tasks or lowering their capacity. We analyze three mechanisms in this framework: $\MB$, $\MD$, and $\MG$.
We first show that these mechanisms are either optimal ($\MB$ and $\MD$) or truthful ($\MG$). Then, we demonstrate that these mechanisms are also the best in terms of PoA (Price of Anarchy), PoS (Price of Stability), and approximation ratio. In other words, no other mechanisms can outperform these ones with respect to these performance measures in our setting.

A future direction of our work is to study the effect of shuffling the agents' order on the manipulability of $\MB$.
Specifically, we are interested in investigating whether randomizing the priority of agents further highlights the differences between $\MB$ and $\MD$.
Moreover, it would be interesting to explore how placing bounds on the number of edges that each agent can report affects the performance and manipulability of the mechanisms we studied.
Finally, we plan to investigate more in details other classic game-theoretical aspects of our model, such as fairness and envy-freeness.



\ack We would like to thank the referees for their comments, which
helped improve this paper considerably.
This project is partially supported by a Leverhulme Trust Research Project Grant (2021 -- 2024). Jie Zhang is also supported by the EPSRC grant (EP/W014912/1).

\bibliography{ecai}

\begin{thebibliography}{10}

\bibitem{abdulkadirouglu2005college}
Atila Abdulkadiro{\u{g}}lu, `College admissions with affirmative action', {\em
  International Journal of Game Theory}, {\bf 33}(4),  535--549, (2005).

\bibitem{abdulkadirouglu2003school}
Atila Abdulkadiro{\u{g}}lu and Tayfun S{\"o}nmez, `School choice: A mechanism
  design approach', {\em American economic review}, {\bf 93}(3),  729--747,
  (2003).

\bibitem{abdulkadiroǧlu2020efficiency}
Atila Abdulkadiroǧlu, Yeon-Koo Che, Parag~A Pathak, Alvin~E Roth, and Olivier
  Tercieux, `Efficiency, {J}ustified {E}nvy, and {I}ncentives in
  {P}riority-{B}ased {M}atching', {\em American Economic Review: Insights},
  {\bf 2}(4),  425--442, (2020).

\bibitem{al20222}
Ahmed Al-Herz and Alex Pothen, `A 2/3-approximation algorithm for
  vertex-weighted matching', {\em Discrete Applied Mathematics}, {\bf 308},
  46--67, (2022).

\bibitem{auricchio2019computing}
Gennaro Auricchio, Federico Bassetti, Stefano Gualandi, and Marco Veneroni,
  `Computing {W}asserstein {B}arycenters via {L}inear {P}rogramming', in {\em
  Integration of Constraint Programming, Artificial Intelligence, and
  Operations Research}, pp. 355--363, Cham, (2019). Springer International
  Publishing.

\bibitem{azar2015truthful}
Yossi Azar, Martin Hoefer, Idan Maor, Rebecca Reiffenh{\"a}user, and Berthold
  V{\"o}cking, `Truthful {M}echanism {D}esign via {C}orrelated {T}ree
  {R}ounding', in {\em Proceedings of the Sixteenth ACM Conference on Economics
  and Computation}, pp. 415--432, (2015).

\bibitem{aziz2022matching}
Haris Aziz, P{\'e}ter Bir{\'o}, and Makoto Yokoo, `Matching {M}arket {D}esign
  with {C}onstraints', in {\em Proceedings of the AAAI Conference on Artificial
  Intelligence}, volume~36, pp. 12308--12316, (2022).

\bibitem{10.5555/3306127.3331717}
Haris Aziz, Serge Gaspers, Zhaohong Sun, and Toby Walsh, `From {M}atching with
  {D}iversity {C}onstraints to {M}atching with {R}egional {Q}uotas', in {\em
  Proceedings of the 18th International Conference on Autonomous Agents and
  MultiAgent Systems}, AAMAS '19, p. 377–385, Richland, SC, (2019).
  International Foundation for Autonomous Agents and Multiagent Systems.

\bibitem{balinski1999tale}
Michel Balinski and Tayfun S{\"o}nmez, `A {T}ale of {T}wo {M}echanisms:
  {S}tudent {P}lacement', {\em Journal of Economic theory}, {\bf 84}(1),
  73--94, (1999).

\bibitem{bayati2005maximum}
M.~Bayati, D.~Shah, and M.~Sharma, `Maximum weight matching via max-product
  belief propagation', in {\em Proceedings. International Symposium on
  Information Theory, 2005. ISIT 2005.}, pp. 1763--1767, (2005).

\bibitem{DBLP:journals/sigact/BirnbaumM08}
Benjamin~E. Birnbaum and Claire Mathieu, `On-line {B}ipartite {M}atching {M}ade
  {S}imple', {\em {SIGACT} News}, {\bf 39}(1),  80--87, (2008).

\bibitem{DBLP:journals/tcs/BiroFIM10}
P{\'{e}}ter Bir{\'{o}}, Tam{\'{a}}s Fleiner, Robert~W. Irving, and David~F.
  Manlove, `The {C}ollege {A}dmissions problem with lower and common quotas',
  {\em Theor. Comput. Sci.}, {\bf 411}(34-36),  3136--3153, (2010).

\bibitem{boehmer2022fine}
Niclas Boehmer and Klaus Heeger, `A {F}ine-{G}rained {V}iew on {S}table
  {M}any-to-one {M}atching {P}roblems with {L}ower and {U}pper {Q}uotas', {\em
  ACM Transactions on Economics and Computation}, {\bf 10}(2),  1--53, (2022).

\bibitem{cechlarova2019problem}
Katar{\'\i}na Cechl{\'a}rov{\'a}, Laurent Gourv{\`e}s, and Julien Lesca, `On
  the {P}roblem of {A}ssigning {P}h{D} {G}rants', in {\em 28th International
  Joint Conference on Artificial Intelligence (IJCAI 2019)}, pp. 130--136,
  (2019).

\bibitem{chakrabarty2014welfare}
Deeparnab Chakrabarty and Chaitanya Swamy, `Welfare {M}aximization and
  {T}ruthfulness in {M}echanism {D}esign with {O}rdinal {P}references', in {\em
  Proceedings of the 5th conference on Innovations in Theoretical Computer
  Science}, pp. 105--120, (2014).

\bibitem{stablematchdiversityNP}
Jiehua Chen, Robert Ganian, and Thekla Hamm, `Stable {M}atchings with
  {D}iversity {C}onstraints: {A}ffirmative {A}ction is beyond {NP}', in {\em
  Proceedings of the Twenty-Ninth International Joint Conference on Artificial
  Intelligence}, IJCAI'20, (2021).

\bibitem{10.1257/000282802762024728}
Yan Chen and Tayfun Sönmez, `Improving {E}fficiency of {O}n-{C}ampus
  {H}ousing: {A}n {E}xperimental {S}tudy', {\em American Economic Review}, {\bf
  92}(5),  1669--1686, (2002).

\bibitem{ding2019clustering}
Lizhong Ding, Zheyun Feng, and Yongsheng Bai, `Clustering analysis of microrna
  and m{RNA} expression data from {TCGA} using maximum edge-weighted matching
  algorithms', {\em BMC medical genomics}, {\bf 12}(1),  1--27, (2019).

\bibitem{dobrian20192}
Florin Dobrian, Mahantesh Halappanavar, Alex Pothen, and Ahmed Al-Herz, `A
  2/3-approximation algorithm for vertex weighted matching in bipartite
  graphs', {\em SIAM Journal on Scientific Computing}, {\bf 41}(1),
  A566--A591, (2019).

\bibitem{Easterfield1946}
Thomas.~E. Easterfield, `A combinatorial algorithm', {\em Journal of The London
  Mathematical Society}, {\bf 21},  219--226, (1946).

\bibitem{edmonds1970bottleneck}
Jack Edmonds and Delbert~Ray Fulkerson, `Bottleneck {E}xtrema', {\em Journal of
  Combinatorial Theory}, {\bf 8}(3),  299--306, (1970).

\bibitem{ehlers2014school}
Lars Ehlers, Isa~E Hafalir, M~Bumin Yenmez, and Muhammed~A Yildirim, `School
  choice with controlled choice constraints: {H}ard bounds versus soft bounds',
  {\em Journal of Economic theory}, {\bf 153},  648--683, (2014).

\bibitem{fadaei2017truthful}
Salman Fadaei and Martin Bichler, `A {T}ruthful {M}echanism for the
  {G}eneralized {A}ssignment {P}roblem', {\em ACM Transactions on Economics and
  Computation (TEAC)}, {\bf 5}(3),  1--18, (2017).

\bibitem{feng2013device}
Daquan Feng, Lu~Lu, Yi~Yuan-Wu, Geoffrey~Ye Li, Gang Feng, and Shaoqian Li,
  `Device-to-{D}evice {C}ommunications {U}nderlaying {C}ellular {N}etworks',
  {\em IEEE Transactions on communications}, {\bf 61}(8),  3541--3551, (2013).

\bibitem{fragiadakis2016strategyproof}
Daniel Fragiadakis, Atsushi Iwasaki, Peter Troyan, Suguru Ueda, and Makoto
  Yokoo, `Strategyproof {M}atching with {M}inimum {Q}uotas', {\em ACM
  Transactions on Economics and Computation (TEAC)}, {\bf 4}(1),  1--40,
  (2016).

\bibitem{fukuda1994finding}
Komei Fukuda and Tomomi Matsui, `Finding {A}ll the {P}erfect {M}atchings in
  {B}ipartite {G}raphs', {\em Applied Mathematics Letters}, {\bf 7}(1),
  15--18, (1994).

\bibitem{GaleS13}
David Gale and Lloyd~S Shapley, `College admissions and the stability of
  marriage', {\em The American Mathematical Monthly}, {\bf 69}(1),  9--15,
  (1962).

\bibitem{goto2016strategyproof}
Masahiro Goto, Atsushi Iwasaki, Yujiro Kawasaki, Ryoji Kurata, Yosuke Yasuda,
  and Makoto Yokoo, `Strategyproof matching with regional minimum and maximum
  quotas', {\em Artificial intelligence}, {\bf 235},  40--57, (2016).

\bibitem{DBLP:conf/soda/GuptaKS14}
Anupam Gupta, Amit Kumar, and Cliff Stein, `Maintaining {A}ssignments {O}nline:
  {M}atching, {S}cheduling, and {F}lows', in {\em SODA}, pp. 468--479, (2014).

\bibitem{Hitchcock1941}
Frank~L. Hitchcock, `The distribution of a product from several sources to
  numerous localities', {\em Journal of Mathematics and Physics}, {\bf
  20}(1-4),  224--230, (1941).

\bibitem{HZ79}
Aanund Hylland and Richard Zeckhauser, `The {E}fficient {A}llocation of
  {I}ndividuals to {P}ositions', {\em Journal of Political Economy}, {\bf
  87}(2),  293--314, (1979).

\bibitem{kamada2017recent}
Yuichiro Kamada and Fuhito Kojima, `Recent {D}evelopments in {M}atching with
  {C}onstraints', {\em American Economic Review}, {\bf 107}(5),  200--204,
  (2017).

\bibitem{kamada2023fair}
Yuichiro Kamada and Fuhito Kojima, `Fair matching under constraints: Theory and
  applications', {\em Review of Economic Studies},  rdad046, (2023).

\bibitem{Kantorovitch1958}
L.~Kantorovitch, `On the translocation of masses', {\em Management Science},
  {\bf 5}(1),  1--4, (1958).

\bibitem{kojima2019new}
Fuhito Kojima, `New {D}irections of {S}tudy in {M}atching with {C}onstraints',
  in {\em The Future of Economic Design},  479--482, Springer, (2019).

\bibitem{krysta2014size}
Piotr Krysta, David Manlove, Baharak Rastegari, and Jinshan Zhang, `Size versus
  {T}ruthfulness in the {H}ouse {A}llocation {P}roblem', in {\em Proceedings of
  the fifteenth ACM conference on Economics and computation}, pp. 453--470,
  (2014).

\bibitem{lahn2021faster}
Nathaniel Lahn, Sharath Raghvendra, and Jiacheng Ye, `A {F}aster {M}aximum
  {C}ardinality {M}atching {A}lgorithm with {A}pplications in {M}achine
  {L}earning', {\em Advances in Neural Information Processing Systems}, {\bf
  34},  16885--16898, (2021).

\bibitem{lovasz2009matching}
L{\'a}szl{\'o} Lov{\'a}sz and Michael~D Plummer, {\em Matching theory}, volume
  367, American Mathematical Society, Michigan, USA, 2009.

\bibitem{manlove2013algorithmics}
David Manlove, {\em Algorithmics of matching under preferences}, volume~2,
  World Scientific, Singapore, 2013.

\bibitem{marcolino2014give}
Leandro~Soriano Marcolino, Albert~Xin Jiang, and Milind Tambe, `Multi-agent
  team formation: Diversity beats strength?', in {\em Twenty-Third
  International Joint Conference on Artificial Intelligence}. Citeseer, (2013).

\bibitem{DBLP:journals/fttcs/Mehta13}
Aranyak Mehta, `Online {M}atching and {A}d {A}llocation', {\em Found. Trends
  Theor. Comput. Sci.}, {\bf 8}(4),  265--368, (2013).

\bibitem{spencer1984node}
Thomas~H. Spencer and Ernst~W. Mayr, `Node {W}eighted {M}atching', in {\em
  Automata, Languages and Programming}, ed., Jan Paredaens, pp. 454--464,
  Berlin, Heidelberg, (1984). Springer Berlin Heidelberg.

\bibitem{Thorndike1950}
Robert~L. Thorndike, `The problem of classification of personnel', {\em
  Psychometrika}, {\bf 15},  215--235, (1950).

\bibitem{7295474}
Li~Wang, Huaqing Wu, Wei Wang, and Kwang-Cheng Chen, `Socially {E}nabled
  {W}ireless {N}etworks: {R}esource {A}llocation via {B}ipartite {G}raph
  {M}atching', {\em IEEE Communications Magazine}, {\bf 53}(10),  128--135,
  (2015).

\bibitem{yokoi2017generalized}
Yu~Yokoi, `A {G}eneralized {P}olymatroid {A}pproach to {S}table {M}atchings
  with {L}ower {Q}uotas', {\em Mathematics of Operations Research}, {\bf
  42}(1),  238--255, (2017).

\bibitem{ZHAO2010837}
Chuanli Zhao and Hengyong Tang, `Single machine scheduling with general
  job-dependent aging effect and maintenance activities to minimize makespan',
  {\em Applied Mathematical Modelling}, {\bf 34}(3),  837--841, (2010).

\end{thebibliography}

\clearpage

\section*{Appendix}

In this appendix, we report the proofs and examples we omitted from the main body of the paper.

\subsection*{Missing Proofs}

\begin{proof}[Proof of Theorem \ref{thm:approx_truth}]
The first part of the statement follows directly from Lemma \ref{thm:truthfulness_cases}.
The second part of the statement has been proven in \cite{dobrian20192}.
Indeed, in \cite{dobrian20192}, it has been shown that the weight of the matching returned by the approximation algorithm that uses augmenting paths of length equal to $1$ is, in the worst case, half of the maximal weight.
In their paper, Dobrian et al study the case in which both sides of the bipartite graph have a non-negative value.
However, if we set the value of each agent in $A$ to be $0$, we have that the weight of the matching studied in \cite{dobrian20192} is the same as the social welfare (from the agents' viewpoint) we defined in the present paper.
Similarly, the maximum weight of the matching is the maximum social welfare achievable.
We, therefore, conclude that the ratio between the social welfare returned by $\MG$ and the maximum one is greater than $0.5$.
Since the approximation ratio is the ratio between the maximum social welfare and the one achieved by $\MG$, we infer that it has to be smaller than $2$.
To prove that the approximation ratio of the mechanism is $2$, consider the following instances.
There are two agents, namely $a_1$ and $a_2$, and two tasks $t_1$ and $t_2$, whose values are $1+\epsilon$ and $1$, respectively, where $\epsilon>0$.
Let $E=\{(a_1,t_1),(a_1,t_2),(a_2,t_1)\}$ be the truthful set of edges.
It is easy to see that $\MG$ returns the matching $\mu_{\MG}=\{(a_1,t_1)\}$, while the optimal one is $\{(a_2,t_1),(a_1,t_2)\}$.
Hence the ratio between the maximum social welfare and the social welfare achieved by $\MG$ on this instance is $\frac{2+\epsilon}{1+\epsilon}$.
By taking the limit for $\epsilon\to 0$, we conclude the proof.
\end{proof}

\begin{proof}[Proof of Lemma \ref{lemma:nopayoff}]
For the sake of simplicity, we present the proof only for the mechanism $\MB$.
Let us denote with $G$ the truthful graph and with $G'$ the graph manipulated by agent $a_k$. 
Let $\mu$ and $\mu'$ be the matching returned by $\MB$ when $G$ and $G'$ are given as input, respectively. 
By contradiction, suppose $\mu'\neq\mu$. In this case, there exists $t_j$ such that $P_j\neq P_j'$ and $P_r=P_r'$ for any $r<j$, where $P_l$ and $P_l'$ are the augmenting paths returned by the BFS at the $l$-th step.
Since $G'\subset G$ and $\mu_{l-1}=\mu'_{l-1}$, we must have that the last edge of $P_l$ is in $G$ and not in $G'$.
However, since $G'$ is only missing edges connected to $a_k$,  we must have that $P_l$ contains an edge that connects agent $a_k$ to a task. 
In particular, at the $j$-th step, the augmenting path returned by the BFS contains an edge connected to $a_k$, which is not possible since $a_k$ is left unmatched by $\mu$.
\end{proof}

\begin{proof}[Proof of Theorem \ref{thm:PoS}]
    Since the Price of Anarchy of a mechanism is always greater than the Price of Stability, we have that $PoS(\MB)\le 2$.
    We now show that $PoS(\MB)= 2$ by showing an instance on which the PoS is equal to $2$.
    Let us consider the example built in the proof of Theorem \ref{thm:PoA}.
    Since there is only one Nash Equilibrium, it is both the worst and best Nash Equilibrium.
    Therefore, following the argument used in the proof of Theorem \ref{thm:PoA}, we conclude $PoS(\MB)\ge 2$, hence $PoS(\MB)=2$.
    From a similar argument, we infer $PoS(\MD)=2$.
    Finally, to prove that the PoS of every deterministic mechanism is greater than $2$, consider the instance build in the proof of Theorem \ref{thm:tight_PoA}.
    Since the Nash Equilibrium of this instance is unique, we conclude that $PoS(\mathbb{M}_{BFS})\ge 2$ for every deterministic and optimal mechanism.
\end{proof}

\begin{proof}
[Proof of Theorem \ref{thm:agent_classes}]
First, we prove that if there are $k$ agents that report the same edge sets (but that do not have necessarily the same capacity), namely $a_1$, $\dots$, $a_k$, then, if agent $a_l$ is allocated at least a task, all the agents $a_1$,$\dots$, $a_{l-1}$ are saturated.
Moreover, if agent $a_l$ is unsaturated, then all the agents $a_{l+1}$, $\dots$, $a_k$ do not receive any task.
Toward a contradiction, assume that agent $a_k$ is allocated a task and that $a_{k-1}$ is not saturated. 
Since agent $a_k$ receives a task, it means that during an iteration of the Algorithm, the augmenting path ends in $a_k$ with an edge $(t_j,a_k)$. 
However, since $a_{k-1}$ was not saturated and it has the same edges as $a_k$, the BFS should have explored the edge $(t_k,a_{k-1})$ before, which is a contradiction.
Through a similar argument, it is possible to show the second part of the statement.
We are now ready to prove the claim of the Theorem.
Indeed, since every class of agent has at least $\alpha_i+1$ agents, for every class, there is at least one agent that is left unmatched.
Indeed, every agent from the $i$-th class can be allocated at most $b_i$ tasks, hence, if every agent is allocated at least one task, from the statement we proved before, we infer that at least $\alpha_i$ agents are saturated. 
Since we have
\begin{equation}
    \label{eq:app}
    |T_i|\le \alpha_i b_i, 
\end{equation}
we conclude that at least one agent is left unmatched.
We now show that the routine of Algorithm \ref{alg:Maxvalue} over the truthful input terminates without using augmenting paths of length greater than $1$. 
Assume, toward a contradiction, that the routine uses an augmenting path of length greater than $1$ to allocate a task $t$. Since there exists at least one agent that is unmatched and connected to $t$, this cannot be. 
If we show that no agents can force the algorithm to use an augmenting path whose length is greater than $1$, we conclude the proof using Lemma \ref{thm:truthfulness_cases}.
Toward a contradiction, assume that an agent can force an augmenting path of length greater than $1$. 
Thus, during the implementation of $\MB$ of the manipulated input, there exists a task, namely $t_j\in T_i$, that is connected only to saturated agents at step $j$.
Finally, we infer a contradiction from equation \eqref{eq:app}, since there are at least $\alpha_i$ agents in the $i$-th category (the manipulating agent is no longer in this class).
\end{proof}

\begin{proof}[Proof of Theorem \ref{thm:PoA_PoS_edge_cap}]

    First, we observe that, since the strategy set of every agent in the ECMS is larger than what it is in the EMS, the mechanisms cannot be truthful in the ECMS.
    Indeed, if the mechanisms are not truthful for the EMS, they cannot be truthful for the ECMS.
    To prove that $PoA(\MB)=PoA(\MD)=2$ it suffices to show that the strategies $\{(FCFS_i,b_i)\}_{i\in [n]}$ forms a Nash Equilibrium for both $\MB$ and $\MD$.
    Without loss of generality, we focus on $\MB$, since the same conclusion can be drawn for $\MD$.
    Toward a contradiction, let us assume that there exists an agent, namely $a_k$, whose best strategy is to play $(S_k,c_k)$ when all the other agents play $(FCFS_i,b_i)$.
    Since $c_k\le b_k$, agent $a_k$ gets at most the same number of tasks that it would get in the truthful case.
    Furthermore, since the tasks that have a higher value than the one $a_k$ would get by playing $(FCFS_k,b_k)$ cannot be de-allocated from the agents already getting them (since every other player is playing $(FCFS_i,b_i)$), agent $a_k$ is getting fewer tasks that have a lower value than the ones it would get by playing $(FCFS_k,b_k)$, which is a contradiction.
    We deduce that this Nash Equilibrium is one of the worst ones, by using the same argument used in the proof of Theorem \ref{thm:Nash_Equlibrium}.
    Indeed, there cannot exist a Nash Equilibrium in which one agent gets a payoff lower than the one it would get by playing $(FCFS_i,b_i)$.
    We therefore conclude that $\{(FCFS_i,b_i)\}_{i\in [n]}$ is one of the worst Nash Equilibria, thus the PoA of $\MB$ (and $\MD$) is equal to $2$.
    To conclude that $PoS(\MB)=PoS(\MD)=2$, consider the instance described in the proof of Theorem \ref{thm:PoS}.
    Indeed, since in this instance, there is only one possible Nash Equilibrium, it is both the worst and best equilibrium possible.
    Moreover, since the PoS of this instance is equal to $2$ and $PoS(\MB)\le PoA(\MB)=2$, we conclude the proof.
    Similarly, we have $PoS(\MD)=2$.
    To show that all the bounds are tight, it suffice to notice that, according to our framework, an agent whose capacity is equal to $1$ cannot manipulate the mechanism by reporting a null capacity.
    Indeed, an agent that reports a null capacity would automatically be excluded from the allocation procedure.
    Since in the instances we used to prove Theorem \ref{thm:tight_PoA} and \ref{thm:PoS} all the agents have a capacity equal to $1$, we can use them to prove the tightness of the bounds in the ECMS.
\end{proof}

\begin{proof}[Proof of Theorem \ref{thm:last_mg_truth_cap}]
To show the truthfulness, it suffices to prove that if an agent misreports only its capacity, it cannot get a higher payoff.
Indeed, if for any given set of edges reported by an agent its best strategy is to report the maximum capacity, we conclude that its best strategy, in the ECMS, is to report all the edges and the maximum capacity.
Since every agent is bounded by its statements, this implies that $\MG$ is truthful.
Let us assume that the capacity of agent $a_i$ is $b_i$ and that it gets a higher payoff by reporting $b_i'<b_i$ as its capacity.
Let us denote with $\mu_i$ the set of tasks allocated to $a_i$ from $\MG$ when it reports truthfully and let us denote with $\mu_i'$ the set of tasks allocated to $a_i$ when it reports $b_i'$ as its capacity.
Toward a contradiction, let $t_j\in \mu_i'$ such that $t_j\notin \mu_i$.
Since $t_j$ is allocated to agent $a_i$ by $\MG$, it means that agent $a_i$ is unsaturated at the $j$-th iteration of Algorithm \ref{alg:Maxvalue} and since $b_i'<b_i$, we conclude that agent $a_i$ is unsaturated at the $j$-th step also when it reports $b_i$ as its capacity.
Therefore, if task $t_j$ is allocated to agent $a_i$ when it reports $b_i'$ as its capacity, the same holds when $a_i$ reports a capacity of $b_i$.
Since every task has a positive value and $\mu'_i\subset \mu_i$, we get a contradiction, therefore the mechanism cannot be manipulated by only misreporting the capacity.
Since the mechanism $\MG$ cannot be manipulated by hiding edges or by reporting a lower capacity, we conclude that $\MG$ has to be truthful when the agents report both quantities.
Finally, since $\MG$ is truthful in the ECMS and the agents are bounded by their statements, its approximation ratio is the same as the one of $\MG$ in EMS (since there is no difference in having the capacities publicly known and reporting them), hence $ar(\MG)=2$.
Again, to prove that the approximation ratio guarantee is tight, it suffice to consider the instance used to prove Theorem \ref{thm:tight_bound_general}.
The group strategyproofness follows by the same argument used in the proof of Theorem \ref{thm:strategyproof} and by observing that reporting a lower capacity cannot increase the payoff of the agents.
\end{proof}

\subsection*{Missing Examples}

\begin{example}
\label{ex:bip_DFS_vs_BFS}
Let us consider the following instance.
We have three agents, namely $a_i$ for $i=1,2,3$, ordered according to their indexes.
The capacity of each agent is equal to one. 
Let the set of tasks be composed of two tasks $t_1$ and $t_2$, whose values are $1$ and $0.5$, respectively. 
Assume that the edge set of the truthful graph is the complete bipartite one, that is $E:=\{ (a_i,t_j) \}_{i=1,2,3;j=1,2}$. 
For this input $\MB$ returns $\mu_{BFS}=\{(a_1,t_1),(a_2,t_2)\}$, while $\MD$ returns $\mu_{DFS}=\{(a_1,t_2),(a_2,t_1)\}$.
Since agent $1$ does not receive its best pick, by Theorem \ref{thm:max_match_agent_one}, we conclude that the DFS algorithm is susceptible to be manipulated by agent $1$ while, according to Theorem \ref{thm:truth_BFS}, the BFS counterpart is not manipulable by any agent. 
Indeed, Lemma \ref{lemma:nopayoff} ensures us that agent $3$ cannot manipulate $\MB$, since the algorithm leaves it unmatched. 
Similarly, agent $1$ cannot manipulate $\MB$ since it is already receiving its best possible allocation. 
Finally, also $a_2$ cannot manipulate $\MB$, since hiding the only edge used during the implementation of Algorithm \ref{alg:Maxvalue} endowed with the BFS would result in either assigning task $t_2$ to agent $3$ or would not change the allocation at all.
\end{example}

\begin{example}
\label{ex:class_BFS_vs_DFS}
Let us consider the following instance.
We have three tasks, namely $t_j$ with $j=1,2,3$ and $5$ agents, namely $a_i$ with $i=1,\dots, 5$. 
Finally, let us assume that the value of the task $t_j$ is equal to $q_j=3^{-j}$.
Furthermore, let us assume that there are two classes of agents.
The first class has a capacity equal to $2$ and it is connected to all three tasks. 
The second class of agents also has a capacity equal to $2$ and it is connected to only the last $2$ tasks, which are $t_2$ and $t_3$.
Let us now assume that the agents $a_1$, $a_3$, and $a_4$ belong to the first category, while $a_2$ and $a_5$ belong to the second one.
It is easy to see that this problem satisfies the conditions of Theorem \ref{thm:agent_classes}.
We, therefore, deduce that no agent can manipulate $\MB$.
It is easy to see that the output of $\MB$ is $\mu_{BFS}=\{(a_1,t_1),(a_1,t_2),(a_2,t_3)\}$ while $\MD$ returns $\mu_{DFS}=\{(a_2,t_1),(a_1,t_2),(a_1,t_3)\}$.
Since agent $1$ does not receive the top task to which it is connected, we conclude that agent one can manipulate $\MD$ in its favor. 
On the contrary, we notice that no agent can manipulate $\MB$ since $3$ agents receive no payoff, hence they cannot manipulate and the first agent gets its best possible payoff, hence it cannot improve it.
Finally, we notice that if agent $a_2$ manipulates, the only change that it could achieve is that task $t_3$ is allocated to agent $a_3$ rather than to $a_2$.
\end{example}

\end{document}